\RequirePackage{amsmath}
\documentclass[smallcondensed,draft]{svjour3}
\usepackage{amssymb}
\usepackage{tikz}
\tikzstyle{arrow}=[-angle 45]
\usepackage{tikz-cd}
\usepackage{wrapfig}
\usepackage[final]{listings}
\usepackage{mathptmx}
\usepackage{stmaryrd,mathtools}

\usepackage{MnSymbol}
\usepackage{hyperref}
\usepackage{url}
\usepackage{prooftree1}
\usepackage{proof}
\usepackage{comment}
\usepackage{listings}
\usepackage{xspace}
\usepackage[show]{ed}

\usepackage{combinators}
\usepackage{macros}

\begin{document}

\title{Building on the Diamonds between Theories: Theory Presentation Combinators}

\author{Jacques Carette \and Russell O'Connor \and Yasmine Sharoda}
\institute{J. Carette  \at
    Department of Computing and Software, McMaster University,
    Hamilton, Ontario, Canada, \\
    \email{carette@mcmaster.ca}\ ORCiD: 0000-0001-8993-9804
\and R. O'Connor \at
    Blockstream.com, \\ \email{roconnor@theorem.ca}
\and Y. Sharoda\\
    Department of Computing and Software, McMaster University,
    Hamilton, Ontario, Canada, \\
    \email{sharodym@mcmaster.ca}
}

\date{Received: date / Accepted: date}

\maketitle

\begin{abstract}

To build a large library of mathematics, it seems more efficient to take
advantage of the inherent structure of mathematical theories.
Various theory presentation combinators have been proposed, and some have been
implemented, in both legacy and current systems.  Surprisingly, the
``standard library'' of most systems do not make pervasive use of these
combinators.

We present a set of combinators optimized for reuse, via the tiny theories
approach. Our combinators draw their power from the inherent structure already
present in the \emph{category of contexts} associated to a dependently typed
language. The current work builds on ideas originating in CLEAR and Specware
and their descendents (both direct and intellectual). Driven by some design
criteria for user-centric library design, our library-building experience
via the systematic use of combinators has
fed back into the semantics of these combinators, and later into an updated
syntax for them.

\keywords{Mechanized mathematics, theories, combinators, dependent types}
\end{abstract}

\setcounter{page}{1}

\lstdefinelanguage{mathscheme}
    {morekeywords={Theory,combine,extend,by,type,axiom,
        implies,not,forall,and,not,or,Inductive,case,of,instance,via,
        defined-in,view,as},
    keywordstyle=\sffamily}

\lstset{language=mathscheme}


\section{Introduction}\label{sec:intro}

The usefulness of a mechanized mathematics system relies on the availability a
large library of mathematical knowledge, built on top of sound foundations.
While sound foundations contain many interesting intellectual challenges,
building a large library seems a daunting task simply because of its
sheer volume.  However, as has been
documented~\cite{MathSchemeExper,CaretteKiselyov11,DuplicationMizar},
there is a tremendous amount of redundancy in existing libraries.  Thus
there is some hope that by designing a good meta-language, we can reduce
the effort needed to build a library of mathematics.

Our aim is to build tools that allow library developers to take
advantage of commonalities in mathematics so as to build
a large, rich library for end-users, whilst expending much less actual
development effort than in the past.  Our means are not in themselves new:
we define various combinators for combining theories, and a semantics
for these.  What is new is that we leverage the surrounding structure
already present in dependent type theories, to help us decide which
combinators to create --- as they are, in some sense, already present.
Furthermore, we further craft our combinators to maximize reuse, of
definitions and concepts.
Our combinators are not solely for creating new theories, as previous
work all emphasize, but also automatically create re-usable connections
between theories, which makes it possible to transport results between
theories, thereby increasing automation~\cite{LittleTheories}.
This also represents the continuation of our work on
\emph{High Level Theories}~\cite{CaretteFarmer08} and
\emph{Biform Theories}~\cite{carette2018biform}
through building a network of theories, leveraging what we learned
through previous experiments~\cite{MathSchemeExper}.
These combinators have now been implemented three times%
\cite{TPCProto,next-700-git,diagrams_mmt} for three
different systems.

\subsection{The Context}
\begin{wrapfigure}[15]{L}{0pt}
  \begin{tikzpicture}
  \coordinate (M1) at (3, 5);
  \coordinate (M2) at (3, 4);
  \coordinate (M3) at (3, 3);
  \coordinate (M4) at (3, 2);
  \coordinate (M5) at (3, 1);
  \coordinate (M6) at (3, 0);
  \node[fill=blue!20,rounded corners] at (M1) {Magma};
  \node[fill=blue!20,rounded corners] at (M2) {Semigroup};
  \node[fill=blue!20,rounded corners] at (M3) {Pointed Semigroup};
  \node[fill=blue!20,rounded corners] at (M4) {Monoid};
  \node[fill=blue!20,rounded corners] at (M5) {AdditiveMonoid};
  \draw [->,thick] (3,4.7) -- (3,4.3);
  \draw [->,thick] (3,3.7) -- (3,3.3);
  \draw [->,thick] (3,2.7) -- (3,2.3);
  \draw [->,thick] (3,1.7) -- (3,1.3);
  \end{tikzpicture}
  \caption{Theories}\label{fig:thystruct1}
\end{wrapfigure}

The problem we wish to solve is easy to state:  we want to shorten the
development time of large mathematical libraries.  But why would
mathematical libraries be any different than other software, where the
quest for time-saving techniques has been long but vain~\cite{MythicalManMonth}?
Because we have known since Whitehead's 1898 text
``A treatise on universal algebra''~\cite{whitehead1898treatise}
that significant parts of mathematics have a lot of structure, structure which
we can take advantage of. The flat list of $342$ structures gathered
by Jipsen~\cite{jipsen} is both impressively large, and could
easily be greatly extended.  Another beautiful source of structure in a
theory graph is that of \emph{modal logics}; Halleck's
web pages on Logic System Interrelationships~\cite{halleck} is quite eye
opening.

Figure~\ref{fig:thystruct1} shows what we are talking about:
The \emph{presentation} of the theory \thy{Semigroup} strictly contains that of
the theory \thy{Magma}, and so on\footnote{We are not concerned with
\emph{models}, whose inclusion go in the opposite direction.}.  It is
pointless to enter this information multiple times -- \emph{assuming} that it
is actually possible to take advantage of this structure.  Strict inclusions at
the level of presentations is only part of the structure: for example, we know
that a \thy{Ring} actually contains two isomorphic copies of \thy{Monoid},
where the isomorphism is given by a simple \emph{renaming}.  There are further
commonalities to take advantage of, which we will explain later in this
paper.  However, the natural structure is not linear like in
Figure~\ref{fig:thystruct1}, but full of diamonds, as shown in
Figure~\ref{fig:cube2_monoid}.  In Computer Science, this is known as
\emph{multiple inheritance}, and the diamonds in inheritance
graphs are much feared, giving rise to \emph{The Diamond Problem}%
\cite{jigsaw1992,traits2006,diamonds2011}, or fork-join inheritance
\cite{sakkinen1989disciplined}.  In our setting, we will find that
these diamonds are a blessing rather than a curse, because they statically
give sharing information, rather than being related to dynamic-dispatch.
\begin{figure}
  \begin{tikzcd}[row sep=0.9em, column sep=0.3em]
    && & \verb|Pointed0| \arrow[dd,hook] & \\
    \verb|Carrier| \arrow[dd,hook] \arrow[rr,hook] & & \verb|Pointed| \arrow[ur,mapsto] & & \\
    & \verb|Magma+| \arrow[rr,hook] \arrow[dd,hook]& & \verb|PointedMagma+|
    \arrow[rr,hook]
    \arrow[dd,hook]& &
    \verb|RightUnital+|  \arrow[dd,hook]\\
    \verb|Magma| \arrow[ur,mapsto] \arrow[dd,hook]  & &
    \verb|PointedMagma| \arrow[dd,hook] \arrow[ur,mapsto] \arrow[from=uu, crossing over]
    \arrow[rr,hook,crossing over] \arrow[from=ll,hook, crossing over]
    & & \verb|RightUnital| \arrow[ur,mapsto] \\
    & \verb|Semigroup+| \arrow[ddd,hook] & &\verb|LeftUnital+| \arrow[rr,hook] &  &
    \verb|Unital+| \arrow[dddllll,hook] \\
    \verb|Semigroup|  \arrow[ddd,hook] \arrow[ur,mapsto]&  & \verb|LeftUnital|
    \arrow[ur,mapsto]
    \arrow[rr,hook,crossing over] &  & \verb|Unital| \arrow[dddllll,hook,crossing
    over]\arrow[ur,mapsto]
    \arrow[from=uu,hook,crossing over]& \\
    &&&& \\
    &  \verb|Monoid+| &  &&\\
    \verb|Monoid|  \arrow[ur,mapsto] && &&
  \end{tikzcd}
  \caption{Structure of the algebraic hierarchy up to Monoids}
  \label{fig:cube2_monoid}
\end{figure}

But is there
sufficient structure outside of the traditional realm of universal algebra, in
other words, beyond single-sorted equational theories, to make it worthwhile
to develop significant infrastructure to leverage that structure?  Luckily,
there is --- generalized algebraic theories~\cite{Cartmell} are a rich source,
which encompasses categories, bicategories, functors, etc as examples.

Our initial motivation, as with our predecessors, was to introduce
combinators to provide more expressive means to define theories. But
the structure of mathematics is expressed just as much through the theory
morphisms that arise between theories --- and not enough attention was paid,
in previous work, on these morphisms. These morphisms are always present in
the semantics of previous work, but not as much as first-class expressions
in the syntax of the system.  We feel that it is crucial that the induced
morphism be easily accessible, to enable transport of results and proofs
between theories.

We note that \textit{in practice}, when
mathematicians are \emph{using} theories rather than developing new ones, they
tend to work in a rather ``flat'' namespace~\cite{CaretteFarmer08}.
An analogy: someone working in
Group Theory will unconsciously assume the availability of all concepts from a
standard textbook, with their usual names and meanings.  As
their goal is to get some work done, whatever structure system builders have
decided to use to construct their system should not \textit{leak} into the
application domain.  They may not be aware of the existence of pointed
semigroups, nor should that awareness be forced upon them.  Thus we need features
that ``flatten'' a theory hierachy for some classes of end-users.
On the other hand,
some application domains do rely on the ``structure of theories'', so we cannot
unilaterally hide this structure from all users either.

\subsection{Contributions}
This paper is a substantial rewrite of~\cite{CaretteOConnorTPC}, where
a variant of the \emph{category of contexts} was used as our setting
for theory presentations.  There we presented a simple term language for
building theories, along with two (compatible) categorical semantics -- one in
terms of objects, another in terms of morphisms.  By using ``tiny theories'',
this allowed reuse and modularity.  We emphasized names, as the objects
we are dealing with are syntactic and ultimately meant for human consumption.
We also emphasized morphisms: while this is categorically obvious, nevertheless
the current literature is largely object-centric.  Put another way:
most of the emphasis in other work is on operational issues, or evolved from
operational thinking, while our approach is unabashedly denotational, whilst
still taking the names that appear inside theory seriously.

We extend the work in multiple ways.  We pay much closer attention to
the structure already present in the categorical semantics of dependent
type theories.  In particular, we extend our semantics to a fibration of
generalized extensions over contexts.  This is not straightforward: not
clobbering users' names prevents us from having a cloven fibration without a
renaming policy.  But once this machinery is in place, this allows us to
build presentations by lifting morphisms over embeddings, a very powerful
mechanism for defining new presentations.  There are obstacles to taking
the ``obvious'' categorical solutions: for example, having all pullbacks
would require that the underlying type theory have subset types, which is
something we do not want to force. This is why we insist on having users
provide an explicit renaming, so that they remain in control of the
names of their concepts.  Furthermore, equivalence of terms
needs to be checked when constructing mediating morphisms, which in some settings
may have implications for the decidability of typechecking. We also give 
complete algorithms --- that have now been implemented three times ---
as well as a type system.

While we are far from the first to provide such combinators, our
requirements are sufficiently different than previous work, that we arrive
at a novel solution, driven by what we believe to be an elegant semantics.


\subsection{Plan of paper}

We motivate our work with concrete examples in
section~\ref{sec:motivation}.  Section~\ref{sec:sem} lays out the basic
(operational) theory, with concrete algorithms.  The theoretical foundations of
our work, the fibered category of contexts, is presented in full detail in
section~\ref{sec:category}, along with the motivation for why we chose to
present our semantics categorically.  This allow us in section~\ref{sec:tpc2} to
formalize a language for theory presentation combinators and present a type
system for it.
We close with some discussion, related work and conclusions
in sections~\ref{sec:discussion}--\ref{sec:conclusion}.

\section{Motivation}\label{sec:motivation}

We outline our motivation for wanting theory presentation combinators, as
well as the design principles we use to guide our development.  Our
motivation seems to differ somewhat from previous attempts, which partly
explains why we end up with similar combinators but with different behaviour.
In this section, we use an informal syntax which should be understandable to
someone with a background in mathematics and type theory;
section~\ref{sec:tpc2} will formalize everything.
Unfortunately, the ``intuitive'' combinators work well on small examples,
but do not seem to work at scale.  We will highlight these problems
in this section, as a means to establish our requirements for a sound
solution.  This coherent semantics (developped in
sections~\ref{sec:sem}
and~\ref{sec:category}) will then lead us to rebuild our formal language,
including its syntax, in section~\ref{sec:tpc2}.

It is important to remember, throughout this section, that our
perspective is that of \emph{system builders}.  Our task is to form a
bridge (via software) between tasks that end-users of a mechanized
mathematics system may wish to perform, and the underlying (semantic) theory
concerned.  This bridge is necessarily syntactic, as syntax is the only
entity which can be symbolically manipulated by computers.  More importantly,
we must respect the syntactic choices of users, even when these choices are
not necessarily semantically relevant.  In other words, for theory presentations,
de Bruijn indices (for example) are unacceptable --- but so are generated names.

\subsection{Overview of Combinators}

Once mathematicians started using the axiomatic approach to Algebra, it became
clear that many theories are structurally related. Whitehead was the first to
formalize this as ``Universal Algebra''~\cite{whitehead1898treatise}. Ever since
then, it should have been well understood that defining a theory like
\tmop{Monoid}, at least when meant to be part of a reusable library of
mathematics, as a single ``module'' is not good practice. We should instead capture
the structural relations of mathematics in our library. There have been a
number of combinators designed previously for this task, and we distinguish
three of them: extension, rename and combination. We use \tmop{Monoid} and
related theories to illustrate them.

It might be useful to note that where we use the term ``combinator'',
others might have used ``construction''. And indeed, what we present
below are \emph{syntactic constructions} that take a \emph{presentation},
perhaps encoded as an algebraic data type, and produce another \emph{presentation}
in the same language. Each is meant to be an algorithmic construction.

\subsubsection{Extension}\label{subsec:extended-by}

The simplest situation is where the presentation of one theory is included,
verbatim, in another.  Concretely, consider \tmop{Monoid} and
\tmop{CommutativeMonoid}.
\[\label{Monoid}
\tmop{Monoid} \define
\begin{thyex}
\thyrow{U}{\tmop{Type}}
\thyrow{\_\cdummy\_}{U \rightarrow U\rightarrow U}
\thyrow{e}{U}
\thyrow{\tmop{right\ identity}}{\forall x:U. x\cdummy e = x}
\thyrow{\tmop{left\ identity}}{\forall x:U. e\cdummy x = x}
\thyrow{\tmop{associative}}
{\forall x,y,z:U. (x\cdummy y)\cdummy z = x\cdummy(y\cdummy z)}
\end{thyex}
\]
\[
\tmop{CommutativeMonoid} \define
\begin{thyex}
\thyrow{U}{\tmop{Type}}
\thyrow{\_\cdummy\_}{U \rightarrow U\rightarrow U}
\thyrow{e}{U}
\thyrow{\tmop{right\ identity}}{\forall x:U. x\cdummy e = x}
\thyrow{\tmop{left\ identity}}{\forall x:U. e\cdummy x = x}
\thyrow{\tmop{associative}}
{\forall x,y,z:U. (x\cdummy y)\cdummy z = x\cdummy(y\cdummy z)}
\thyrow{\tmop{commutative}}{\forall x,y:U. x\cdummy y = y\cdummy x}
\end{thyex}
\]

\noindent As expected, the only difference is that \tmop{CommutativeMonoid}
adds a \tmop{commutative} axiom.  Thus, given \tmop{Monoid}, it would be
much more economical to define
\[
\tmop{CommutativeMonoid} \define \extb{Monoid}{\tmop{commutative} : \forall x,y:U. x\cdummy y =
y\cdummy
x}
\]
The extension combinator is present in almost
every formal system.  It is sometimes viewed externally (as above) as
an extension, and sometimes internally, where \tmop{Monoid} is \textit{include}d
in \tmop{CommutativeMonoid}.

The construction is trivial, at least syntactically, as it is basically \emph{append}.
To ensure that the construction is semantically valid, one needs to check that the
newly added name is indeed new, and that the new type (here the commutativity
axiom) is well-typed in the context of the previous definitions.

\subsubsection{Renaming}
\label{subsec:renaming_motivation}
From an end-user perspective, our \tmop{CommutativeMonoid} has one flaw:
such monoids are frequently written \emph{additively} rather than
multiplicatively.  Let us call a commutative monoid written additively
an \emph{abelian monoid}, as we do with groups.  Thus it would be
convenient to be able to say
\[
\tmop{AbelianMonoid} \define\tmop{CommutativeMonoid}
 \left[\ \cdummy \rename +,\ e\rename 0 \right]
\]

But how are \tmop{AbelianMonoid} and
\tmop{CommutativeMonoid} related?  Traditionally, these are regarded as
\emph{equal}, for semantic reasons.  However, since we are dealing with
presentations, as syntax, we wish to regard them as \emph{isomorphic}
rather than equal\footnote{Univalent Foundations~\cite{hottbook} does not
change this, as we \textbf{can} distinguish the two, \emph{as presentations}.}
.  While working up to explicit isomorphism
is a minor inconvenience for the semantics, this enables us to respect
user choices in names.

Note that many existing systems do not have a renaming facility, and indeed
their libraries often contain both additive and multiplicative monoids that are
inequivalent.  This makes using the Little Theories method~\cite{LittleTheories}
very awkward.

\subsubsection{Combination}

Using these features, starting from \thy{Group} we might write
\[
\tmop{CommutativeGroup} \define \extb{Group}{ \tmop{commutative} : \forall x,y:U. x\cdummy y
= y\cdummy x }
\]
\noindent which is problematic: we lose the relationship that every commutative
group is a also commutative monoid.  In other words, we reduce our ability to
transport results ``for free'' to other theories, and must prove that these
results transport, even though the morphism involved is (essentially) the
identity.  We need a feature to express sharing.  Taking a cue from previous
work, we might want to say
\[
\tmop{CommutativeGroup} \define \combineover{CommutativeMonoid}{Group}{Monoid}
\]
\noindent  This can be read as saying that \thy{Group} and
\thy{CommutativeMonoid} are both ``extensions'' of \thy{Monoid}, and
\thy{CommutativeGroup} is formed by the union (amalgamated sum) of those
extensions.  In other words, by \textbf{over}, we mean to have a single
copy of \tmop{Monoid}, to which we add the extensions necessary for
obtaining \tmop{CommutativeMonoid} and \tmop{Group}.  This implicitly
assumes that our two \tmop{Monoid} extensions are meant to be orthogonal,
in some suitable sense.  

Unfortunately, while this ``works'' to build a sizeable library (say of the
order of $500$ concepts) in a fairly economical way, it is nevertheless
brittle.  Let us examine why this is the case.  It should be clear that by
\thy{combine}, we really mean \emph{pushout}. But a pushout is an operation
defined on $2$ morphisms (and implicitly $3$ objects); but our syntax gives the $3$
objects and leaves the morphisms implicit. Can we infer the morphisms
and prove that they are uniquely determined?  Unfortunately not:
these morphisms are (in general) impossible to infer, especially in the presence of
renaming.  As mentioned previously, there are two distinct morphisms from
\thy{Monoid} to \thy{Ring}, with neither being ``better'' or somehow
more canonical than the other.  In other words, even though our goal is to
produce \emph{theory presentations}, using pushouts as a
fundamental building block gives us no choice but to \emph{take morphisms
seriously}.

\subsubsection{Mixin}

There is one last annoying situation:
\begin{align*}
\tmop{LeftUnital} & \define \extb{PointedMagma}{ \tmop{leftIdentity} : \forall x:U. e\cdummy x
= x } \\
\tmop{RightUnital} & \define \extb{PointedMagma}{ \tmop{rightIdentity} : \forall x:U. x\cdummy e
= x } \\
\tmop{Unital} & \define \combineover{LeftUnital}{RightUnital}{PointedMagma}
\end{align*}

The type of \tmop{rightIdentity} arises from flipping the arguments to $\cdummy$ --- a
kind of symmetry.  Could we somehow capture this symmetry transformation and automate it?
We will not give suggestive syntax for this here, as a natural solution will emerge from
the semantics.  We will call this ``mixin'' as this construction bears a strong
resemblance to that of mixins in programming languages with traits.

\subsection{Morphisms}\label{sec:morphisms}

Our constructions do more than build new presentations, they also
describe how the symbols of the source theory can be mapped into
expressions of the target theory. For
extensions, this is an injective map. In other words,
\[
\tmop{CommutativeMonoid} \define \extb{Monoid}{\tmop{commutative} : \forall x,y:U. x\cdummy y =
y\cdummy x}
\]
creates $\tmop{CommutativeMonoid}$ and lets us see a
\tmop{Monoid} inside a \tmop{CommutativeMonoid}. What does it mean to
``see'' a \tmop{Monoid}?  It means to be able to have a definition of
all symbols of \tmop{Monoid}; in this case, this would be
\begin{align*}
\tmop{MtoCM} \define &
 \left[ \tmop{U} \mapsto \tmop{U},
        \cdummy \mapsto \cdummy,
        \tmop{e} \mapsto \tmop{e}, \right.\\
 & \left.{}\tmop{right\_identity} \mapsto \tmop{right\_identity},
           \tmop{left\_identity}  \mapsto \tmop{left\_identity}, \right.  \\
 & \left.{}\tmop{associative} \mapsto \tmop{associative} \right] :
  \tmop{CommutativeMonoid} \Rightarrow \tmop{Monoid}
\end{align*}
\noindent which is a tedious way of writing out the
identity morphism.

More economical is to instead write only what is to
be \emph{dropped}: $\tmop{MtoCM} \define \dispmap{\tmop{commutative}}$.

\paragraph{Renaming} For renaming, it is natural to require that the map on
names causes no collisions, as that would rename multiple concepts to be the
same.  While this is a potentially interesting operation on presentations, it
is not the operation that users have in mind for \emph{renaming}.  Thus
we will insist that renamings are \emph{bijections}.  These also
induce a map, in the opposite direction, that lets us see the source theory
inside the target theory.

\begin{wrapfigure}{r}{0pt}
	\begin{tikzpicture}[node distance=3.0cm, auto]
	\node (P) {$\left\{ V : \tmop{Type}\right\}$};
	\node (B) [right of=P] {$\left\{\ ? : \tmop{Type}\right\}$};
	\node (A) [below of=P] {$\left\{ U : \tmop{Type}\right\}$};
	\node (C) [below of=B] {$\left\{ W : \tmop{Type}\right\}$};
	\draw[->] (P) to node {$[V \rename\ ?]$} (B);
	\draw[->] (A) to node {$[U \rename V]$} (P);
	\draw[->] (A) to node [swap] {$[U \rename W]$} (C);
	\draw[->] (C) to node [swap] {$[W \rename\ ?]$} (B);
	\end{tikzpicture}
	\caption{The need for choosing names when combining theories.}
	\label{fig:renaming-problem}
\end{wrapfigure}

\paragraph{Combine}

Combinations do create morphisms as well, but unfortunately choosing names for
symbols in the resulting theory (when there are clashes) can be a
problem: there are simple situations where there is no canonical name for
some of the objects in the result.  For example, take the presentation of
\tmop{Carrier}, aka $\left\{ U\ :\ \tmop{Type}\right\}$ and the morphisms induced
by the renamings $U \rename V$ and $U \rename W$; while the result will
necessarily be isomorphic to \tmop{Carrier}, there is no canonical choice of
name for the end result.  This is one problem we must solve.
Figure~\ref{fig:renaming-problem} illustrates the issue.  
It also illustrates that we
really do compute \emph{amalgamated sums} and not simply syntactic union.
Because the names $U$, $V$ and $W$ are assumed to be meaningful to the user,
we do not wish to automatically compute a \emph{fresh} name for it,
to replace the $?$ in the Figure; while it is always feasible --- the
pushout does indeed exist --- it is not a good idea to invent names for
concepts that ought be be both meaningful and allow further compositional
extensions.

Extensions, renames and combines are sufficient to build a fairly sizable
library~\cite{MathSchemeExper}.  Extensions are used to introduce new
symbols and concepts.  Renames are used to ensure the ``usual'' name is
used in context; our library defines a \emph{binary operator} only once,
but there are dozens of different ones in use in mathematics. Renames take
care of renaming all uses in a theory, so that associativity, for example,
will apply to the `right' symbol.  Most theories in common use are an
amalgamation of concepts from smaller theories, and combine is the
simplest way to construct such theories.

In general, a map from one presentation $P$ to another $Q$ will be called a
\emph{presentation morphism}. Such a morphism can be written as a
sequence of assignments of valid terms of $Q$ for each symbol of $P$,
of the right type.
For example, one can witness that the additive naturals numbers form a monoid
(i.e. a morphism from $\tmop{Monoid}$ to $\tmop{Nat}$) by showing how
the terms of $\tmop{Nat}$ map to those of \tmop{Monoid}:
\begin{equation}\label{eq:natview}
\mathsf{view}\ \tmop{Nat}\ \mathsf{as}\ \tmop{Monoid}\ \mathsf{via}\
\left[ U \mapsto \N, \cdummy \mapsto +_{\N}, e \mapsto 0, ... \right]
\end{equation}
\noindent where we elide the names of the proofs.  The right hand side of
an assignment does not need to be a symbol, it can be any
well-typed term.  For example, we can have a morphism from
\tmop{Magma} to itself which maps the binary operation to its opposite:

\begin{equation}\label{eq:magmaview}
\begin{tikzpicture}[node distance=7.0cm, auto,baseline=(current bounding box.center)]
  \node (P) {$
\begin{thyex}
  \thyrow{U}{\tmop{Type}}
  \thyrow{\_\cdummy\_}{U \rightarrow U \rightarrow U}
\end{thyex} $};
  \node (B) [right of=P] {$
\begin{thyex}
  \thyrow{U}{\tmop{Type}}
  \thyrow{\_\cdummy\_}{U \rightarrow U \rightarrow U}
\end{thyex} $};
  \draw[->] (P) to node {$[U \mapsto\ U,
                          \cdummy \mapsto\ \mathsf{flip}\ \cdummy]$} (B);
\end{tikzpicture}
\end{equation}

Note that \emph{presentation morphisms} are \textbf{not} constructions, even
though our constructions give right to presentation morphisms. The relation
between them is like that of \emph{instances} in Haskell, where a
Haskell \texttt{class} is akin to a theory presentation, and an
\texttt{instance} is presentation morphism, from the given class to the
empty theory, i.e. ground terms in Haskell. A similar analogy holds for
signatures and structures in Ocaml and Standard ML, as well as traits and
instances in Scala.
\subsection{Polymorphism}
Recall the definition of \verb|AbelianMonoid|:
\[
\tmop{AbelianMonoid} \define\tmop{CommutativeMonoid}
\left[\ \cdummy \rename +,\ e\rename 0 \right]
\]
There is nothing specific to
\thy{CommutativeMonoid} in the renaming $\cdummy \rename +, e \rename 0$.
This is in fact a bijection of the four symbols involved --- and can be
applied to any theory where the pairs
$(\cdummy,+)$ and $(e,0)$ have compatible signatures (including the case where
they are not present).  Similarly, \thy{extended\_by} defines a
``construction'' which can be applied to any presentation whenever all the
symbols used in the extension are defined.  In other words, a reasonable
semantics should allow us to name and reuse these operations.
While it is tempting to think that these
operations will induce some functors on presentations, this is not quite
the case: name clashes prevent that.
Such a clash occurs when we try
to extend a theory with a new symbol (say $+$) when that symbol already
exists, and might mean something else entirely. Similar issues occur
with renamings.

\subsection{Little Theories}\label{sec:littletheories}

An important observation is that \emph{contexts} of a type theory
(or a logic) contain the same information as a \textit{theory presentation}.
Given a context, theorems about specific structures can be constructed by
transport along theory morphisms~\cite{LittleTheories}.
For example, in the context of the definition of~\tmop{Monoid} (\ref{Monoid}),
we can prove that the identity element, $e$, is unique:
\[
    \forall e' : U. \left( \left( \forall x.e' \cdummy x = x \right) \vee \left(
    \forall x.x \cdummy e' = x \right) \right) \rightarrow e' = e
\]

In order to apply this theorem in other contexts, we can provide a theory
morphism from one presentation to another. For example, consider
semirings:

\begin{multline*}
    \tmop{Semiring} \define \\
     \left\{ \begin{array}{rcl}
       U & : & \tmop{Type}\\
       \left( + \right) & : & (U,U)\rightarrow U\\
       \left( \times \right) & : & (U,U)\rightarrow U\\
       0 & : & U\\
       1 & : & U\\
       \tmop{+} \tmop{associative} & : & \forall x,y,z : U. \left( x +
       y \right) + z = x + \left( y + z \right)\\
       \tmop{+} \tmop{commutative} & : & \forall x,y : U.x + y = y +
       x\\
       \tmop{+} \tmop{left} \tmop{identity} & : & \forall x : U. 0 + x
       = x\\
       \tmop{+} \tmop{right} \tmop{identity} & : & \forall x : U.x + 0
       = x\\
       \tmop{\times} \tmop{associative} & : & \forall x,y,z : U.
       \left( x \times y \right) \times z = x \times \left( y \times z
       \right)\\
       \tmop{\times} \tmop{left} \tmop{identity} & : & \forall x : U.
       1 \times x = x\\
       \tmop{\times} \tmop{right} \tmop{identity} & : & \forall x :
       U.x \times 1 = x\\
       \tmop{left} \tmop{distributive} & : & \forall x,y,z : U.x \times \left(
       y + z \right) = x \times y + x \times z\\
       \tmop{right} \tmop{distributive} & : & \forall x,y,z : U. \left( y + z
       \right) \times x = y \times x + z \times x
     \end{array} \right\}
\end{multline*}

There are two natural morphisms from \tmop{Monoid} to \tmop{Semiring},
induced by the renamings
$\left[\ \cdummy \rename +,\ e\rename 0 \right]$ and
$\left[\ \cdummy \rename \times,\ e\rename 0 \right]$.
Both of these can be used to transport our example theorem to prove that $0$
and $1$ are the unique identities of their respective associated binary
operations. There are also many more morphisms; for example, we could send
$(\cdummy)$ to $\lambda x,y:U. y\cdummy x$.

Thus, in general, we cannot infer morphisms --- there are simply too many
non-canonical choices.  We also do not want to have to write out all morphisms
in explicit detail, as the example of $\dispmap{\tmop{commutative}}$ shows.

\subsection{Tiny Theories}

Previous experiments~\cite{MathSchemeExper} in library building with
combinators showed that for compositionality,
it was best to use \emph{tiny} theories, i.e. adding a single concept
in at a time.  This is useful
both for defining pure signatures (presentations with no axioms) as well
as when defining properties such as commutativity.
Typically one proceeds by first defining the smallest
typing context in which the property can be stated.  For commutativity,
this is \tmop{Magma} -- which also turns out to be a
signature.  We can then obtain the theories we are more interested in
by extending a signature with a combination of the properties over
that base.

Concretely, suppose we want to construct the
presentation of \tmop{CommutativeSemiring} by adding the commutativity property
to \tmop{Semiring} (see \S\ref{sec:littletheories}).
As commutativity is defined
as an extension to \tmop{Magma}, we need a morphism from
\tmop{Magma} to \tmop{Semiring}.  This morphism will tell us (exactly!) which binary
operation we want to make commutative.  Here we would
map $U$ to $U$ and $\left( \cdummy \right)$ to $
\left( \times \right)$.  We can then combine
that with the injection from \tmop{Magma} to \tmop{CommutativeMagma}
to produce a \tmop{CommutativeSemiring} presentation.

\[ \tmop{CommutativeSemiring} \define \left\{ \begin{array}{rcl}
U & : & \tmop{Type}\\
\left( + \right) & : & U \rightarrow U \rightarrow U\\
\left( \times \right) & : & U \rightarrow U\\
0 & : & U\\
1 & : & U\\
. & . & .\\ 
\tmop{\times} \tmop{commutative} & : & \forall x y : U.x \times
y = y \times x\\
. & . & . 
\end{array} \right\} \]

Note that this construction also requires a further renaming that maps
\tmop{commutative} to \tmop{\times\,commutative} in
order to avoid a name collision (as addition was
already commutative in \tmop{Semiring}).

Figure~\ref{fig:cube2_monoid} is representative of the actual
kinds of graphs we get when using \emph{tiny theories} systematically.

Note that we did not need general morphisms here:
the ones we needed were combinations of morphisms induced by extensions and renamings.

Nevertheless, whether tiny theories are used or not, when we want
to work with a \tmop{Group}, we do not really care about the details of
how the library developers constructed it. In particular, we might never
have heard of \tmop{Monoid} or \tmop{Semigroup}, never mind \tmop{Magma}
--- and yet the user should still be able to use, and understand
\tmop{Group}.
This is not the case for other approaches, such as the hierarchy in Agda\cite{agda_stdlib},
Lean\cite{lean2019} or Coq\cite{geuvers2002,spitters2011}.

Furthermore, if library developers change their mind, this should not
cause any downstream problems.  When one of the authors added \tmop{Magma}
to the Algebra hierarchy of Agda, this caused incompatibilities between
versions of the standard library.  Similar problems happened in the
Haskell ecosystem
when the \tmop{Functor}-\tmop{Applicative}-\tmop{Monad} proposal~\cite{wiki:haskell_hierarch}
was adopted, which introduced \tmop{Applicative} in between \tmop{Functor} and
\tmop{Monad}.

\subsection{Models}

It is important to remember that models are contravariant: while the
construction of a presentation proceeds from (say)
\tmop{Monoid} to \tmop{CommutativeMonoid},
the model morphisms
are from $\sem{}{\tmop{CommutativeMonoid}}$ to
$\sem{}{\tmop{Monoid}}$.  Theorems are contravariant with respect to
\emph{model} morphisms, so that they travel from
\tmop{Monoid} to \tmop{CommutativeMonoid}.

In this way a presentation morphism from a presentation $T$ \emph{to} the empty
theory presentation provides a model by assigning to every symbol of $T$ a
closed term of the ambient type theory (or logic).
These models are thus internal, rather than necessarily being
$\mathsf{Set}$-models.  For example, if our underlying logic can express
the existence of a type of natural numbers, $\mathbb{N}$, then the
morphism given by~(\ref{eq:natview}) can be used to transport our theorem
to prove that $0$ is the unique identity element for $+_{\mathbb{N}}$.

\subsection{Induced Requirements}

We review the various issues that arose in the informal presentation of
combinators in this section, and assemble them into a set of
requirements for ``new'' combinators.

First, we need to have a setting in which extensions, renamings and
combinations make sense.  We will need to pay close attention
to names, both to allow user control of names and prevent accidental
collisions.  To be able to maintain human-readable names for all concepts, we
will put the burden on the \emph{library developers} to come up with a
reasonable naming scheme, rather than to necessarily push that issue onto end
users.  Symbol choice carries a lot of
\emph{intentional} and \emph{contextual} information which is commonly used in
mathematical practice.

We saw the need for a convenient language to define different kinds of
morphisms, as well as keep track of the properties of these morphisms.
Name permutations, extensions, combinations and general morphisms have
different properties --- tracking these can greatly simplify their use
and re-use.  We need a lightweight syntax where easily inferred information
can be omitted in common usage.

As we want to use both the little theories and tiny theories
methods, our language (and semantics) needs to allow, even promote, that
style.  We will see that, semantically, not all morphisms have the same
compositional properties.  We will thus want to single out, syntactically,
as large a subset of well-behaved morphisms as possible, even though we know
we can't be complete (some morphisms will be well-behaved but require a
proof to show that, rather than being well-behaved because of meta-theoretical
properties).

For example, we earlier tried to provide an explicit base theory over which
\tmop{combine} can work; this works whenever there is a unique injection from
the base theory into the extensions. While this is frequently the case, this
is not always so.  Thus we need to be able to be explicit about which
injection we mean.  A common workaround~\cite{MMT,implicit2019}
is to use long and/or qualified names that ``induce'' injections more often ---
but this still does not scale.  Much worse, this has
the effect of leaking the details of how a presentation was constructed into
the names of the symbols of the new presentation.  This essentially prevents
later refinements, as all these names would change.  As far as we can tell,
any automatic naming policy will suffer from this problem, which is why we
insist on having the library developers explicitly deal with name clashes.  We
can then check that this has been done consistently.  In practice few renamings
are needed, so if we have special syntax for ``no renaming needed'' that is
consistent with the syntax when a renaming is required, the resulting system
should not be too burdensome to use.

We can summarize our requirements as follows:
\begin{itemize}
\item Names and symbols associated to concepts should all be user-determined,
\item It should be possible to use the usual mathematical symbols for
concepts,
\item Concepts should be defined once in the minimal context necessary for
their definition, and transported to their use site,
\item Identical concepts with different names (because of change of
context) should be automatically recognized as ``the same'' concept,
\item The choices made by library developers of the means of construction
of theory presentations should be invisible to end-users,
\item Changing the means of construction of a theory should remain invisible,
\item Constructions should be re-usable (whenever possible),
\item Meta-properties of morphisms should not be forgotten by the system.
\end{itemize}

\section{Basic Semantics}\label{sec:sem}

We present needed definitions from dependent type theory and start
to develop the categorical semantics of theory presentation combinators.
The principal reason to use category theory is that previous work
on categorical semantics of dependent type theory has essentially
established that the structure we need for our combinators to ``work''
is already present there.

Recall the basic observation first made in Section~\ref{sec:littletheories},
that theory presentations and contexts of a dependent type theory
are the same: a list of symbol-type pairs, where the types of latter
symbols may depend on earlier symbols to be well-defined. In this
section, we will use ``presentation'' and ``context'' interchangeably.

Presentations depend on a background dependent type theory, but are
agnostic as to many of the internal details of that theory. We outline
what we require, which is standard for such type theories:
\begin{itemize}
  \item An infinite set of variable names \vars.

  \item A typing judgement for terms $s$ of type $\sigma$ in a context
  $\Gamma$ which we write $\Gamma \vdash s : \sigma$.

  \item A kinding judgement for types $\sigma$ of kind $\kappa$ in a context
  $\context{\Gamma}$ which we write\\
   $\context{\Gamma} \vdash \sigma : \kappa : \Box$.  We further assume that the set
   of valid kinds $\kappa : \Box$ is given and fixed.

  \item A definitional equality (a.k.a. convertibility) judgement of terms
  $s_1$ of type $\sigma_1$ and $s_2$ of type $\sigma_2$ in a context $\context{\Gamma}$,
  which we write $\context{\Gamma} \vdash s_1 : \sigma_1 \equiv s_2 : \sigma_2$. \ We
  will write $\context{\Gamma} \vdash s_1 \equiv s_2 : \sigma$ to denote $\context{\Gamma} \vdash
  s_1 : \sigma \equiv s_2 : \sigma$.

  \item A notion of substitution on terms. Given a list of variable
  assignments $\assignment{x_i}{s_i}{i < n}$
  and an expression $e$ we write $\substitutiondef{e}{x_i}{s_i}{i < n}$
  for the term $e$ after simultaneous substitution of variables $\left\{ x_i
  \right\}_{i < n}$ by the corresponding term in the assignment.
\end{itemize}
\noindent We use the meta-variable $v$ to denote an assignment, and its
application to a term $e$ by $\substitution{e}{v}{}$.

\subsection{Theory Presentations}\label{sec:pres}
A theory presentation is a well-typed list of declarations.
Figure~\ref{fig:ctx} gives the formation rules.
We use $\syms{\context{\Gamma}}$ to denote the set of variables names of a
well-formed context $\context{\Gamma}$:
$$ \syms{\context{\varnothing}} = \EmptyThy \qquad
   \syms{\context{\Gamma}\ ;\ x : \sigma} = \syms{\context{\Gamma}} \cup \left\{ x \right\}
$$
\begin{figure}[ht]
    \begin{proofrules}
      \[ \ \justifies \varnothing \ \wfctx \]
      \[ \context{\Gamma}\ \wfctx \qquad \sigma \notin \syms{\context{\Gamma}}
    \qquad \context{\Gamma} \vdash \kappa : \Box \justifies
    (\context{\Gamma}\ ;\ \sigma : \kappa)\ \wfctx \]
      \[ \context{\Gamma}\ \wfctx \qquad x\notin \syms{\context{\Gamma}}
                       \qquad \context{\Gamma} \vdash \sigma : \kappa : \Box \justifies
         (\context{\Gamma}\ ;\ x : \sigma)\ \wfctx \]
    \end{proofrules}
\caption{Formation rules for contexts}
\label{fig:ctx}
\end{figure}

\subsection{Morphisms (Views)}

As outlined in Section~\ref{sec:morphisms}, a morphism
from a theory presentation $\theory{\Gamma}$ to a theory presentation $\theory{\Delta}$
is an assignment of well-typed $\theory{\Delta}$-expression to each declaration of
$\theory{\Gamma}$. The assigments transport well-typed terms in the context
$\context{\Gamma}$ to well-typed terms in $\context{\Delta}$, by substitution.
Figure~\ref{fig:views} gives the formation rules.
\begin{figure}[ht]
\begin{proofrules}
  \[ \context{\Delta}\ \wfctx \justifies \view{}{\varnothing}{\Delta} \]
  \[ (\context{\Gamma}\ ;\ x : \sigma)\ \wfctx \qquad
     \view{v}{\Gamma}{\Delta} \qquad
     \context{\Delta} \vdash r : \substitution{\sigma}{v}{} \justifies
     \view{v,x \mapsto r}{(\context{\Gamma}\ ;\ x : \sigma)}{\context{\Delta}} \]
\end{proofrules}
  \caption{Formation rules for morphisms (substitutions).}
  \label{fig:views}
\end{figure}

There is a subtle but important distinction between assignments $\lbag v
\rbag$ and morphisms, $\left[ v \right] : \Gamma \rightarrow \Delta$:
morphisms are typed, and thus $\Gamma$ and $\Delta$ are integral to the
definition, while the same assignment may occur in different morphisms.

Since most morphisms allow us to ``view'' one presentation inside another,
we will frequently refer to the morphisms as \emph{views}.

There is no quotienting done on morphisms. We do rely on the
underlying type theory to furnish us with a notion of
equivalence.

\subsubsection{Inclusions, Renames and Embeddings}\label{sec:extdefn}

In Section~\ref{subsec:extended-by}, we define a construction to extend a
presentation with new fields, and in Section~\ref{subsec:renaming_motivation}, renamings
were defined. Section~\ref{sec:morphisms}, details how these correspond
to morphisms acting solely on names, being respectively an inclusion and
a bijection.

We thus define an \emph{embedding}
to be a special kind of morphism, which we denote
\[ \tilde{\pi} : \viewtype{\Gamma}{\Delta} \]
where we require that $\tilde{\pi}$ is the morphism induced by 
$\pi:\vars\to\vars$ where:
\begin{itemize}
\item $\pi : \vars \to \vars$ has \emph{finite support} (i.e.
outside of a finite subset of \vars, $\pi\left(v\right) = v$).
\item $\pi$ is a bijection,
\item $\pi^{-1}$ restricts to an injective function $\pi^{-1} : |\Delta| \to |\Gamma|$.
\end{itemize}
The $\tilde{\ }$ on $\tilde{\pi}$ is a reminder that while $\pi$ is function
on names, $\tilde{\pi}$ is a morphism with special properties.
Note that inclusions are embeddings, with empty support~\cite{pitts2013nominal}.
Embeddings thus include both inclusions and renamings.

As we have mentioned previously,
\tmop{Ring} is an extension of \tmop{Monoid} in two
different ways, and hence both embeddings cannot be inclusions.
Inclusions will not be special in our formalism, other than being a case
of an embedding.  We draw
attention to them here as many other systems make inclusions play a very
special role. As we will see later, it is instead
\emph{display maps} which should be held as holding a special role.

\subsubsection{Composition}

Given two morphism $\view{v}{\Gamma}{\Delta}$ and $\view{w}{\Delta}{\Phi}$,
we can compose them $\compose{v}{w}{\Gamma}{\Phi}$.
If $\viewdef{v}{\substitutiondef{}{a}{r_a}}{a \in \syms{\Gamma}}$
then the composite morphism is

\[ \composedef{v}{w}{\substitutiondef{}{a}{\substitution{r_a}{w}{}}{a \in \syms{\Gamma}}}  \]

That this gives a well-defined notion of composition, and that it is
associative is standard~\cite{Cartmell,JacobsCLTT,taylor1999practical}.

\subsubsection{Equivalence of Morphisms}

Two morphisms with the same domain and codomain,
$\viewname{u}, \view{v}{\Gamma}{\Delta}$
are \emph{equivalent} if\\
$\Delta \vdash r_a : (\substitution{\sigma_a}{u}{})
  \equiv s_a : (\substitution{\sigma_{a}}{v}{})$
  where
\begin{eqnarray*}
  \theory{\Gamma} \define \substitution{}{a : \sigma_{a}}{a \in \syms{\Gamma}} \\
  \viewdef{u}{\substitution{}{a := r_a}{a \in \syms{\Gamma}}} \\
  \viewdef{v}{\substitution{}{a := s_a}{a \in \syms{\Gamma}}}
\end{eqnarray*}

\subsubsection{The category of theory presentations}

The preceeding gives the necessary ingredients to define the
\emph{category of theory presentations} $\catthry$, with
theory presentations as objects and views as morphisms.
The identity inclusions are the
identity morphisms.

Note that in~\cite{CaretteOConnorTPC}, we worked with $\catctx = \catthry ^{op}$, which is
traditionally called the \emph{category of contexts}, and is more often
used in categorical
logic~\cite{Cartmell,JacobsCLTT,taylor1999practical,PittsAM:catl}.
In our setting, and as is common in the context of specifications
(see for example~\cite{PuttingTheoriesTogether,Smith93,CoFI:2004:CASL-RM}
amongst many others), we prefer to take our intuition from \emph{textual
inclusion} rather than \emph{models}.  Nevertheless, for
the \emph{semantics}, we too will use $\C$, as this
not only simplifies certain arguments, it also makes our work easier to
compare to that in categorical logic.

\subsection{Combinators}\label{sec:combinators}

Given theory presentations, embedding and views, we can
can now define presentation and view combinators.  In fact, all combinators in
this section will end up working in tandem on presentations and views.  They
allow us to create new presentations/views from old, in a
more convenient manner than building everything by hand.

The constructions (operational semantics) will be spelled out in full detail,
and are directly implementable.
In the next section, we will give them a categorical semantics;
we make a few inline remarks here to help the reader understand why we choose
a particular construction.

\subsubsection{Renaming}

Given a presentation $\theory{\Gamma}$ and
an injective renaming function $\renfuntype{\pi}{\Gamma}$
we can construct a new theory presentation $\Delta$
by renaming $\Gamma$'s symbols: we will denote this action of $\pi$ on $\Gamma$
by $\applyrenfun{\pi}{\Gamma}$.  We also construct an embedding from $\tilde{\pi}
: \Gamma \rightarrow \applyrenfun{\pi}{\Gamma}$ which
provides a translation from $\Gamma$ to the constructed presentation
$\applyrenfun{\pi}{\Gamma}$. 
For this construction as a whole, we use the notation
\[ \renameDef{\renSource}{\renFun} \]
\noindent The rename function used in Section~\ref{subsec:renaming_motivation}, produces the theory 
presentation
\tmop{AbelianMonoid} and the embedding 
\lstinline[mathescape]|$\tilde{\pi}$ : $\tmop{CommutativeMonoid}$ $\to$ $\tmop{AbelianMonoid}$|

\subsubsection{Extend}
\label{sec:extension}
Given a theory presentation $\extSource$, a
name $a$ that does not occur in $\extSource$ and a well formed type $\sigma$ of
some kind $\kappa$, (i.e. $\extSource \vdash \sigma : \kappa : \Box$) we can
construct a new theory
presentation
$\viewdef{\extTarget}{\extSource ; a : \sigma}$
and the embedding
$\tilde{\idF} : \Gamma \rightarrow \Delta$.
More generally, given a sequence of fresh names,
types and kinds, $\left\{ a_i \right\}_{i < n}$, $\left\{ \sigma_i \right\}_{i
< n}$, and $\left\{ \kappa_i \right\}_{i < n}$ we can define a sequence of
theory presentations $\extSource_0 \define \extSource$ and $\extSource_{i + 1} \define
\extSource_i ; a_i : \sigma_i$ so long as $\extSource_i \vdash \sigma_i : \kappa_i :
\Box$. Given such as sequence we construct a new theory presentation $\extTarget
\define \extSource_n$ with the embedding $\tilde{\idF} : \Gamma \rightarrow \Delta$. 

As $\extTarget$ is the concatenation of $\extSource$ with
$\left\{ a_i : \sigma_i : \kappa_i \right\}_{i < n}$,  we will use
$\extSource\rtimes\extTarget^{+}$ to denote the target of this view whenever the
components of $\extTarget^{+}$ are clear from context.
$\extTarget^{+}$ is \emph{rarely a valid presentation}, as it 
usually depends on $\extSource$.  This is why we use an asymmetric
symbol $\rtimes$.

Note that general embeddings
$\extArrow{\pi}{\extSource}{\extTarget}$ as defined in
\S\ref{sec:extdefn}
can be decomposed into a renaming composed with an $\rtimes$, in other
words
$\extArrow{\pi}{\Gamma}{\Delta} = \tilde{u} ; \tilde{\idF}$
where
  $\extArrow{u}{\Gamma}{\applyrenfun{\pi}{\Gamma}}$ and
$\extArrow{\idF}{\applyrenfun{\pi}{\Gamma}}{\applyrenfun{\pi}{\Gamma} \rtimes \extTarget^{+}}$.
We will also write these as $\substitution{\Gamma}{u}{\Delta}$ when we do
not wish to focus on the pieces of the embedding.

Embeddings which are inclusions are traditionally called \emph{display maps} in
$\C = \TP^{op}$, and our
$\tilde{\idF} : \arrowthrytype{\Gamma}{(\Gamma; a : \sigma)}$
in $\catthry$ is denoted
by $\displaymap{\hat{a}}{(\extSource ; a : \sigma)}{\extSource}$ in $\C$
~\cite{taylor1999practical}, and $\delta_{a}$ in~\cite{JacobsCLTT}.

For notational convenience, we encode the construction above as
an explicit function to a \emph{record}
containing two fields, \texttt{pres} (for presentation) and \texttt{embed}
(for embedding).

\[\extensionDef{\extSource}{\extDecls}\]
\noindent where $\extDecls = \left\{a_{i}:\sigma_{i}:\kappa_{i}\right\}_{i<n}$. 
Section~\ref{subsec:extended-by} gives an example of an extension with $\Gamma = $ \tmop{Monoid} and 
$\Delta^+$ being the declaration of the \tmop{commutative} axiom.

\subsubsection{Combine}
\label{sec:combine}

To combine two embeddings
$\view{u_{\Delta}}{\Gamma}{\Delta}$
and
$\view{u_{\Phi}}{\Gamma}{\Phi}$ to give a presentation $\combineResult$, we
need to make sure the results agree on $\Gamma$
(to avoid cases like the one in Figure~\ref{fig:renaming-problem}). 
To insure this, we ask that two injective renaming functions
$\renfuntype{\pi_\Delta}{\Delta}$
and
$\renfuntype{\pi_\Phi}{\Phi}$
satisfying
\begin{equation}\label{eqn:renaming}
\pi_{\Delta} \left( x \right) = \pi_{\Phi} \left( y \right)
\Leftrightarrow \exists z \in \left| \Gamma \right|.\ x =
\substitution{z}{u_{\Delta}}{} \wedge y = \substitution{z}{u_{\Phi}}{}
\end{equation}
\noindent are also provided. $\pi_{\Delta}$ and $\pi_{\Phi}$ will be used to give a unique
name to the components of $\Gamma$ --- for example the carrier $U$ in
Figure~\ref{fig:renaming-problem}.

Suppose that the two embeddings decompose as
$\Delta = \substitution{\Gamma}{u_{\Delta}}{} \rtimes \Delta^+$
and $\Phi = \substitution{\Gamma}{u_{\Phi}}{}\rtimes\Phi^+$. 
Denote by $u_{\Delta}$ the action on $|\Gamma|$ of $\view{u_{\Delta}}{\Gamma}{\Delta}$,
and by $u_{\Phi}$ the action on $|\Gamma|$ of $\view{u_{\Phi}}{\Gamma}{\Phi}$.
Define

\[\combineResult \define \combineResult_0 \rtimes \left( \combineResult_{\Delta} \cup
\combineResult_{\Phi} \right)\] where
\begin{align*}
\combineResult_0
   & \; \define \applyrenfun{\left(u_{\Delta} ; \pi_{\Delta}\right)}{\Gamma} \\
\combineResult_{\Delta}
     & \define \applyrenfun{\pi_{\Delta}}{\Delta^{+}} \\
\combineResult_{\Phi}
     & \define \applyrenfun{\pi_{\Phi}}{\Phi^{+}} \\
\end{align*}
Condition~\ref{eqn:renaming} is defined to insure that
$\combineResult_0 \equiv
   \applyrenfun{\left(u_{\Phi} ; \pi_{\Phi}\right)}{\Gamma}$
is also true. Similarly, by construction,
$\combineResult_0 \rtimes \left( \combineResult_{\Delta} \rtimes
\combineResult_{\Phi}\right)$ is equivalent to
$\combineResult_0 \rtimes \left( \combineResult_{\Phi} \rtimes \combineResult_{\Delta}\right)$; we denote
this equivalence class\footnote{In practice, theory presentations are rendered
(printed, serialized) using a topological sort where ties are broken alphabetically, so as to be
construction-order indedepent.}
of views by
$\combineResult_0 \rtimes \left( \combineResult_{\Delta} \cup \combineResult_{\Phi}\right)$.

The combination operation also provides embeddings
$\view{v_{\Delta}}{\Delta}{\combineResult}$ and $\view{v_{\Phi}}{\Phi}{\combineResult}$
where
$\viewname{v_{\Delta}}\define\tilde{\pi}_{\Delta}$ and
$\viewname{v_{\Phi}}\define\tilde{\pi}_{\Phi}$.
A calculation shows that $\composeviews{u_\Delta}{v_\Delta}$
is equal to $\composeviews{u_\Phi}{v_\Phi}$ (and not just
equivalent);  we denote this joint morphism
$\view{uv}{\Gamma}{\combineResult}$.
Furthermore,
combine provides a set of mediating views from the constructed theory
presentation $\combineResult$. Suppose we are given views
$\view{w_{\Delta}}{\Delta}{\Omega}$ and
$\view{w_{\Phi}}{\Phi}{\Omega}$
such that the the composed views
$\compose{u_\Delta}{w_\Delta}{\Gamma}{\Omega}$ and
$\compose{u_{\Phi}}{w_{\Phi}}{\Gamma}{\Omega}$ are equivalent. \ We can
combine $\viewname{w_{\Delta}}$ and $\viewname{w_{\Phi}}$ into a
mediating view
$\view{w_{\combineResult}}{\combineResult}{\Omega}$ where
\[
\viewdef{\viewname{w_\combineResult}}
{\substitution{}{\pi_{\Delta}(x) := \substitution{x}{w_\Delta}{}}{x \in \syms{\Delta}}
\cup
 \substitution{}{\pi_{\Phi}(y) := \substitution{y}{w_\Phi}{}}{y \in \syms{\Phi}}
}
\text{.}
\]
This union is well defined since if $\pi_{\Delta} \left( x \right) =
\pi_{\Phi} \left( y \right)$ then there exists $z$ such that
$x = \substitution{z}{u_\Delta}{}$ and $y = \substitution{z}{u_\Phi}{}$, in which case
$\substitution{x}{w_\Delta}{} = \substitution{\substitution{z}{u_\Delta}{}}{w_\Delta}{}$
and
$\substitution{y}{w_\Phi}{} = \substitution{\substitution{z}{u_\Phi}{}}{w_\Phi}{}$
are equivalent since by assumption $\composeviews{u_{\Delta}}{w_{\Delta}}$ and
$\composeviews{u_{\Phi}}{w_{\Phi}}$ are equivalent.
It is also worthwhile noticing that this construction
is symmetric in $\Delta$ and $\Phi$.

For this construction, we use the following notation, where we use the
symbols as defined above (omitting type information for notational
clarity)
\[
\comfun\left( u_{\Delta}, u_{\Phi}, \pi_{\Delta}, \pi_{\Phi}\right) \define
\left\{\begin{aligned}
  \mathtt{pres} & = \combineResult_0\rtimes\left(\combineResult_{\Delta} \cup \combineResult_{\Phi}\right) \\
  \mathtt{embed}_{\Delta} & = \left[v_{\Delta}\right] : \Delta\rightarrow\combineResult\\
  \mathtt{embed}_{\Phi} & = \left[v_{\Phi}\right] : \Phi\rightarrow\combineResult\\
  \mathtt{diag} & = \left[uv\right]:\Gamma\rightarrow\combineResult\\
  \mathtt{mediate} & = \lambda\ w_{\Delta}\ w_{\Phi}\ .\  w_{\combineResult}
\end{aligned}\right\}\]
The attentive reader will have noticed that we have painstakingly constructed
an explicit \emph{pushout} in $\catthry$.  There are two reasons to do this:
first, we need to be this explicit if we wish to be able to implement such
an operation.  And second, we do not want an arbitrary pushout, because we
do not wish to work up to isomorphism as that would ``mess up'' the names.
This is why we need user-provided injective renamings $\pi_{\Delta}$ and
$\pi_{\Phi}$ to deal with potential name clashes.  If we worked up to
isomorphism, these renamings would not be needed, as they can always be
manufactured by the system -- but then these are no longer necessarily related
to the users' names. Alternatively, if we use \emph{long names} based on the
(names of the) views, the method used to construct the presentations and views
``leaks'' into the names of the results, which we also consider undesirable.

\paragraph{Example} Consider combining the two embeddings\\
$[u_\Delta] : \tmop{Magma} \to \tmop{Semigroup} = \tilde{\idF}$ and 
$[u_\Phi] : \tmop{Magma} \to \tmop{AddMagma} = [\tmop{U} \mapsto \tmop{U} , \circ \mapsto +]$; 
in the above, this makes $\Gamma = \tmop{Magma}$, $\Delta = \tmop{Semigroup}$ and
$\Phi = \tmop{AddMagma}$.
Choosing $\pi_\Delta = \pi_\phi = \idF$ does not satisfy
condition~\ref{eqn:renaming}. The problem is that the two embeddings
$[u_\Delta]$ and $[u_\Phi]$ disagree on the name of the binary operation.
Thus the user must provide a renaming; for example, the user might
choose $+$ and define \tmop{AddSemigroup}, by using $\pi_\Delta = [\circ
\mapsto +,  \tmop{associativity}\_\circ \mapsto
\tmop{associativity}\_+]$; $\pi_{\phi}$ can remain $\idF$. (We would prefer for the user to
only need to specify $[\circ \to +]$, and for the system to infer
$[\tmop{associativity}\_\circ \mapsto \tmop{associativity}\_+]$ but we leave this
for future work). The algorithm then proceeds to compute the expected
\tmop{AddSemigroup} and accompanying embeddings.

\subsubsection{Mixin}

Given a view
$\view{u_\Delta}{\Gamma}{\Delta}$, an embedding $\view{u_\Phi}{\Gamma}{\Phi}$
and two disjoint injective renaming functions
$\renfuntype{\pi_\Delta}{\Delta}$ and $\renfuntype{\pi_\Phi}{\Phi}$,
where the embedding $\Phi$ decomposes as $\Phi = \substitution{\Gamma}{u_{\Phi}}{} \rtimes \Phi^+$,
we can mixin the view into the embedding,
constructing a new theory presentation $\combineResult$. We define $\combineResult \define
\combineResult_1 \rtimes \combineResult_2$ where we need a new renaming
$\pi'_{\Phi}$:
\begin{align*}
  {\pi'_{\Phi^+}} \left( y \right) & \define \left\{
  \begin{array}{ll}
    z\left[ u_{\Delta} \right]\left[ x \rename
          \pi_{\Delta}\left(x\right)\right]_{x \in \left| \Delta \right|}
       & \text{when there is a\ } z \in \left|\Gamma\right|
           \text{such that\ } z\left[u_{\Phi} \right] = y\\
    \pi_{\Phi^+} \left( y \right)  & \text{when\ } y \in \left|\Phi^+\right|
  \end{array} \right. \\
  \combineResult_1 & \define \applyrenfun{\pi_{\Delta}}{\Delta} \\
  \combineResult_2 & \define \applyrenfun{\pi'_{\Phi}}{\Phi^+} \\
\end{align*}
The mixin also provides an embedding $\view{v_{\Delta}}{\Delta}{\combineResult}$ and a view
$\view{v_{\Phi}}{\Phi}{\combineResult}$,
defined as
\begin{align*}
  \viewname{v_{\Delta}} & \define \tilde{\pi}_{\Delta}\\
  \viewname{v_\Phi} & \define \tilde{\pi}'_{\Phi} \\
\end{align*}
By definition of embedding, there is no $z\in\syms{\Gamma}$ that
is mapped into $\Phi^{+}$ by $\viewname{u_{\Phi}}$. The definition of
$\pi'_{\Phi}$ is arranged such that
$\composeviews{u_\Delta}{v_\Delta}$ is equal to $\composeviews{u_\Phi}{v_\Phi}$
(and not just equivalent);  so we can denote this joint morphism by
$\view{uv}{\Gamma}{\combineResult}$.
In other words, in a mixin, by only allowing renaming of the \emph{new components} in $\Phi^{+}$, we
insure commutativity \emph{on the nose} rather than just up to isomorphism.

Mixins also provide a set of mediating views from the constructed theory
presentation $\combineResult$. Suppose we are given the views
$\view{w_\Delta}{\Delta}{\Omega}$ and$\view{w_\Phi}{\Phi}{\Omega}$
such that the composed views
$\compose{u_\Delta}{w_\Delta}{\Gamma}{\Omega}$
and
$\compose{u_{\Phi}}{w_{\Phi}}{\Gamma}{\Omega}$ are equivalent.
We can combine $\viewname{w_{\Delta}}$ and $\viewname{w_{\Phi}}$ into
the mediating view $\view{w_\combineResult}{\combineResult}{\Omega}$ defined
as
\[
\viewdef{\viewname{w_\combineResult}}
{\substitution{}{\pi_{\Delta}(x) \mapsto \substitution{x}{w_\Delta}{}}{x \in \syms{\Delta}}
  \cup
  \substitution{}{\pi'_{\Phi}(y) \mapsto \substitution{y}{w_\Phi}{}}{y \in \syms{\Phi^+}}
}
\text{.}
\]

For mixin, again using the symbols as above, we denote the construction
results as
\[
\mixfun\left( u_{\Delta}, u_{\Phi}, \pi_{\Delta}, \pi_{\Phi}\right) \define
\left\{\begin{aligned}
  \mathtt{pres} & = \combineResult_1\rtimes\combineResult_2 \\
  \mathtt{embed}_{\Delta} & = \left[v_{\Delta}\right] : \Delta\rightarrow\combineResult\\
  \mathtt{view}_{\Phi} & = \left[v_{\Phi}\right] : \Phi\rightarrow\combineResult\\
  \mathtt{diag} & = \left[uv\right]:\Gamma\rightarrow\combineResult\\
  \mathtt{mediate} & = \lambda\ w_{\Delta}\ w_{\Phi}\ .\  w_{\combineResult}
\end{aligned}\right\}\]
Symbolically the above is very similar to what was done in combine, and indeed
we are constructing all of the data for a specific pushout.  However in this
case the results are not symmetric, as seen from the details of the construction
of $\combineResult_1$ and $\combineResult_2$, which stems from the fact that in this case
$\viewname{v_{\Phi}}$ is an arbitrary view rather than an embedding. The
\tmop{Flip} view of Section~\ref{sec:morphisms} is an example.

\section{Categorical Semantics}\label{sec:category}

We delve more deeply into the semantics of combine and mixin. 
The reason we choose category theory to do this is that the
structure we need has already largely been developed as
part of Categorical Logic and categorical semantics of
Dependent Type Theory.  In other words, because the category
of contexts is already known to have enough pullbacks and
to support suitable fibrations, it seems natural to mine
this structure for our purposes. 

The categorical interpretation (in $\catctx$) of combine is unsurprising:
pullback. But mixin is more complex: it is a Cartesian lifting in a suitable
fibration. But we also obtain that our semantics is \emph{total}, even for
mixin.  Key is that the algorithm for mixin in the previous section produces a unique
\emph{syntactic} representation of the results. Many combinators with
well-defined categorical semantics (such as unrestricted mixin) do not have
this property: while they have models, these models cannot be written down.

At first glance, the definitions of combine and mixin may appear ad hoc and
overly complicated.  This is because, in practice, the renaming functions
$\pi_{\Delta}$ and $\pi_{\Phi}$ are frequently the \emph{identity}.  The
main reason for this is that mathematical vernacular uses a lot of rigid
conventions, such as usually naming an associative, commutative, invertible
operator which possesses a unit $+$, and the unit $0$, backward composition
is $\circ$, forward composition is $;$, and so on.  But the usual notation
of lattices is different than that of semirings, even though they share a
similar ancestry -- renamings are clearly necessary at some point.

While our primary interest is in theory presentations, the bulk of the
categorical work in this area has been done on the category of contexts,
which is the opposite category.  To be consistent with the existing literature,
we will give our categorical semantics in terms of $\catctx = \catthry^{op}$.
Thus if $\arrowthry{v}{\Gamma}{\Delta}$ is a view, then a corresponding
morphism, $v$, exists
from context $\context{\Delta}$ to context $\context{\Gamma}$.
We will write such morphisms as
$\arrowctx{v}{\Delta}{\Gamma}$ when we are considering the category of contexts,
with composition as before.

\subsection{Semantics}
The category of contexts forms the base category for a fibration. The fibered
category $\catext$ is the category of context extensions.
The objects of $\catext$ are embeddings of contexts. We write such objects
as $\objectext{u}{\Gamma}{\Delta}$ where
$\theory{\Gamma}$ is the base and $\context{\Delta}$ is the 
extended context.  
The notation is to remind the reader that the morphisms are display
maps (i.e. that $\Gamma$ is a strict prefix of $\Delta$).

A morphism between two embeddings is a pair of views forming a
commutative square with the embeddings.
Thus given embeddings
$\objectext{u_2}{\Gamma_2}{\Delta_2}$
and
$\objectext{u_1}{\Gamma_1}{\Delta_1}$,
a morphism between these consists of two morphisms
$\arrowctx{v_\Delta}{\Delta_2}{\Delta_1}$ and $\arrowctx{v_\Gamma}{\Gamma_2}{\Gamma_1}$
from $\catctx$ such that
$\composeext{v_\Gamma}{u_1} = \composeext{u_2}{v_\Delta} :
\arrowctxtype{\Gamma_2}{\Delta_1}$.
When we need to be very precise, we write such a morphism as
$u_2\begin{array}{ccc}
  & v_{\Delta} & \\
  \Delta_2 & \rightarrowtriangle & \Delta_1\\
  \downharpoonright &  & \downharpoonright\\
  \Gamma_2 & \rightarrowtriangle & \Gamma_1\\
  & v_{\Gamma} &
\end{array}  u_1 $.  We will write
$\extarr{v}{\Gamma}{\Delta}{u_2}{u_1}$ whenever the rest of the information can
be inferred from context.  When given a specific morphism in $\catext$, we will
use the notation $\arrowArbExt{e}$.

A fibration of $\catext$ over $\catctx$ is defined by giving a
suitable functor from $\catext$ to $\catctx$. Our \ ``base'' functor
sends
an embedding $\objectext{e}{\Gamma}{\Delta}$ to $\Gamma$ and sends
a morphism $\extarr{v}{\Gamma}{\Delta}{u_2}{u_1}$ to
its base morphism $\arrowctx{v_\Gamma}{\Gamma_2}{\Gamma_1}$.

\begin{theorem}
  This base fibration is a Cartesian fibration.
\end{theorem}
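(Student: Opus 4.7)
The plan is to exhibit, for each morphism $v_\Gamma$ in $\catctx$ with codomain $\Gamma_1$ and each object of $\catext$ over $\Gamma_1$ (an embedding $u_1 : \Gamma_1 \to \Delta_1$ in $\catthry$), a Cartesian lift. Categorically this is just the pullback-of-display-maps construction that is standard in the categorical semantics of dependent type theory; the only subtlety is that the ``display maps'' we must pull back are not bare inclusions but arbitrary embeddings, so potential name collisions need to be managed. The \emph{mixin} combinator of Section~\ref{sec:combinators} has been engineered to perform exactly this construction, and the proof is essentially a repackaging of its specification into a fibrational one.

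First I would instantiate mixin. Given a view $v_\Gamma$ from $\Gamma_1$ to $\Gamma_2$ (reading $v_\Gamma$ through $\catctx = \catthry^{op}$) and an embedding $u_1 : \Gamma_1 \to \Delta_1$, take $\Gamma := \Gamma_1$, $u_\Delta := v_\Gamma$ and $u_\Phi := u_1$ in the notation of Section~\ref{sec:combinators}. Provided the user supplies renamings $\pi_\Delta, \pi_\Phi$ that disambiguate any clashes between $\Gamma_2$ and the fresh part $\Delta_1^{+}$ of $\Delta_1$ (the identity suffices when no clashes occur), mixin outputs a presentation $\combineResult$, an embedding $v_\Delta : \Gamma_2 \to \combineResult$, a view $v_\Phi : \Delta_1 \to \combineResult$, and the guarantee that the resulting square commutes on the nose. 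The pair $(v_\Phi, v_\Gamma)$, regarded as a morphism of $\catext$ from the embedding $v_\Delta$ to the embedding $u_1$, is the candidate Cartesian lift.

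Next I would verify the universal property. Let $e : u \to u_1$ be any morphism of $\catext$ whose base $g$ factors as $g = v_\Gamma \circ h$ in $\catctx$ for some $h$. We need a unique mediating morphism $u \to v_\Delta$ in $\catext$ over $h$ whose composite with the candidate lift recovers $e$. This is supplied by the \texttt{mediate} field of mixin's output: setting $w_\Delta := h$ (the lower component) and $w_\Phi$ to be the upper component of $e$, the equivalence hypothesis that \texttt{mediate} requires reduces to the commutativity of the square representing $e$ combined with the factorization $g = v_\Gamma \circ h$, and its output is the desired mediating view $w_\combineResult$.

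The remaining work, and the main obstacle, is bookkeeping with names. Uniqueness of the mediating map follows because any candidate is forced on the ``$\pi_\Delta$-renamed $\Gamma_2$-symbols'' of $\combineResult$ to agree with $h$, and on the ``$\pi'_\Phi$-renamed $\Delta_1^{+}$-symbols'' to agree with the upper component of $e$; mixin's $w_\combineResult$ is the unique such assignment. The more subtle point, already flagged in the introduction, is that different admissible choices of $\pi_\Delta, \pi_\Phi$ produce isomorphic but not strictly equal lifts, so without a fixed renaming policy the fibration is not \emph{cloven} --- but for Cartesianness alone it is enough to produce some lift satisfying the universal property, which the mixin construction does.
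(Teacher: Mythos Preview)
Your approach is essentially the paper's: invoke $\mixfun$ to produce the candidate lift and then use its $\mathtt{mediate}$ field to supply the universal morphism, checking uniqueness by case analysis on whether a symbol of $\combineResult$ comes from the $\pi_\Delta$-renamed part or the $\pi'_\Phi$-renamed part.  The observation that the fibration is not cloven, and that existence of \emph{some} Cartesian lift suffices, is also exactly what the paper says.

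There is one technical slip in how you feed $\mathtt{mediate}$.  The inputs to $\mathtt{mediate}$ are two views with common domain (in $\catctx$) $\Omega$, the \emph{top} of the given $\catext$-object $u$: one view $\Omega \to \Delta$ and one view $\Omega \to \Phi$.  You set ``$w_\Delta := h$'', but $h$ is the base factorisation, a map out of the \emph{bottom} of $u$, not out of $\Omega$.  What you need is the composite of the embedding $u$ (viewed as a $\catctx$-morphism from $\Omega$ to its base) with $h$; in the paper's notation this is $w_\Psi\,;\,w_0$.  With that correction the required compatibility hypothesis for $\mathtt{mediate}$ (that $u_\Delta;w_\Delta$ and $u_\Phi;w_\Phi$ agree) does indeed follow from the commutativity of the square $e$ together with $g = v_\Gamma \circ h$, and the rest of your outline goes through.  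You should also record explicitly that the resulting $w_\combineResult$ makes the square with $u$ and $v_\Delta$ commute, so that it really is an $\catext$-morphism over $h$; this is not automatic from the specification of $\mathtt{mediate}$ alone and is checked by hand in the paper.
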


This theorem, in slightly different form, can be found in
\cite{JacobsCLTT} and \cite{taylor1999practical}.  We give a full proof
here because we want to make the link with our mixin construction explicit.
We use the results of \S\ref{sec:combinators} directly.

\begin{proof}
  Suppose $\arrowctx{u_\Delta}{\Delta}{\Gamma}$ is a morphism in $\catctx$,
  and $\objectext{u_\Phi}{\Gamma}{\Phi}$ is an object of $\catext$ in the fiber of $\context{\Gamma}$
  (i.e. an embedding).
  We need to construct a Cartesian lifting of $\extname{u_\Delta}$, which is
  a Cartesian morphism of $\catext$ over $\extname{u_\Delta}$.
  \ The components of the mixin construction
  are exactly the ingredients we need to create this Cartesian lifting.
  Let $\renfuntype{\pi_\Delta}{\Delta}$
  and $\renfuntype{\pi'_{\Phi}}{\Phi^+}$
  be two disjoint injective renaming functions. Note that such $\pi_{\Delta}$ and
  $\pi_{\Phi}$ always exist because $\varSet$ is infinite while $\syms{\context{\Delta}}$ and
  $\syms{\context{\Phi^+}}$ are finite.
  Let
\[
\mixfun\left( \extname{u_{\Delta}}, \extname{u_{\Phi}}, \pi_{\Delta}, \pi_{\Phi}\right) \define
\left\{\begin{aligned}
  \mathtt{pres} & = \combineResult \\
  \mathtt{embed}_{\Delta} & = v_{\Delta} : \combineResult\rightarrow\Delta\\
  \mathtt{view}_{\Phi} & = v_{\Phi} : \combineResult\rightarrow\Phi\\
  \mathtt{diag} & = uv : \combineResult\rightarrow\Phi\\
  \mathtt{mediate} & = \lambda\ w_{\Delta}\ w_{\Phi}\ .\  w_{\combineResult}
\end{aligned}\right\}\]
\noindent where we recall that we are now working in $\catctx$, the opposite of
$\catthry$, and thus the direction of the morphisms is flipped.
Then $\arrowArbExt{e} \define
    v_{\Delta} \begin{array}{ccc}
    & v_{\Phi} & \\
    \combineResult & \rightarrowtriangle & \Phi\\
    \downharpoonright &  & \downharpoonright\\
    \Delta & \rightarrowtriangle & \Gamma\\
    & u_{\Delta} &
  \end{array} u_{\Phi} $ is a morphism of $\catext$ which is a
  Cartesian lift of $\extname{u_{\Delta}}$.

  Firstly, to see that $\arrowArbExt{e}$ is in fact a morphism of $\catext$, we note that
  $\arrowthry{v_\Delta}{\Delta}{\combineResult} $ is an
  embedding, so $\objectext{v_\Delta}{\Delta}{\combineResult}$
  is an object of $\catext$.
  Next we need to show that
  $\composeext{v_\Delta}{u_\Delta} = \composeext{v_\Phi}{u_\Phi}$.
  Let $z \in \syms{\context{\Gamma}}$.
  \ Then $z \left[ u_{\Delta} \right] \left[ v_{\Delta} \right] 
        = z \left[ u_{\Delta} \right]  \left[ x \rename \pi_{\Delta} \left( x \right) \right]$ 
  by definition of $\extname{v_{\Delta}}$.
  On the other hand, 
  $z \left[ u_{\Phi} \right] \left[ v_{\Phi} \right] = 
   z \left[ u_{\Phi} \right] \left[ y \mapsto {\pi'_{\Phi}} \left( y \right) \right]_{y \in \left| \Phi \right|}$
  by definition of $ v_{\Phi} $. However, $z \left[ u_{\Phi} \right]$ is a
  variable since $u_{\Phi}$ is an embedding, and by definition
  ${\pi'_{\Phi}} \left( z \left[ u_{\Phi} \right] \right) = z \left[
  u_{\Delta} \right] \left[ x \rename \pi_{\Delta} \left( x \right) \right]_{x
  \in \left| \Delta \right|}$ so that we have $z \left[ u_{\Delta} \right]
  \left[ v_{\Delta} \right] = z \left[ u_{\Phi} \right] \left[ v_{\Phi}
  \right]$ as required.

  Secondly we need to see that $\arrowArbExt{e}$ is a Cartesian lift of
  $\arrowctx{u_{\Delta}}{\Delta}{\Gamma}$.
  We need to show that for any
  morphism $\arrowArbExt{f} \define w_{\Psi}  \begin{array}{ccc}
    & w_{\Phi}  & \\
    \Omega & \rightarrowtriangle & \Phi\\
    \downharpoonright &  & \downharpoonright\\
    \Psi & \rightarrowtriangle & \Gamma\\
    & w_{\Gamma}  &
  \end{array} u_{\Phi} $ from $\catext$ and any arrow
  $\arrowctx{w_0}{\Psi}{\Delta}$ from $\catctx$ such that
  $\extname{w_\Gamma} = \extname{w_0} ; \extname{u_{\Delta}} :
  \arrowctxtype{\Psi}{\Gamma}$, there is a unique mediating morphism
  $\arrowArbExt{h} \define 
  w_{\Psi}  \begin{array}{ccc}
    & w_{\combineResult}  & \\
    \Omega & \rightarrowtriangle & \combineResult\\
    \downharpoonright &  & \downharpoonright\\
    \Psi & \rightarrowtriangle & \Delta\\
    & w_0  &
  \end{array} v_{\Delta} $ from $\catext$ such that
\begin{equation}
\arrowArbExt{h}\ ;\ \arrowArbExt{e} = \arrowArbExt{f}
  \label{EcomposeArrow}
\end{equation}
  To show that such an $\arrowArbExt{h}$ exists, we only need to construct
  $\arrowctx{w_{\combineResult}}{\Omega}{\combineResult}$ and show that it
  has the
  required properties. We will show that the mediating morphism $\extname{w}$ from the mixin
  construction given
  $\arrowctx{w_\Phi}{\Omega}{\Phi}$
  and
  $w_{\Delta}  \define w_{\Psi}
   ; w_0  : \arrowctxtype{\Omega}{\Delta}$ is the
  required morphism.

  First we note that $w_{\Delta} ; u_{\Delta} = w_{\Phi} ; u_{\Phi} $ as
  required by the mixin construction for the mediating morphism since 
  $ w_{\Delta} ; u_{\Delta} = 
    w_{\Psi} ; w_0 ; u_{\Delta} =
    w_{\combineResult} ; v_{\Phi}; u_{\Phi} =
    w_{\Phi} ; u_{\Phi} $ by chasing around the diagram of the equality
  $\arrowArbExt{h}\ ;\ \arrowArbExt{e} = \arrowArbExt{f}$.
  Now taking $w_{\combineResult}  \define w $ we
  need to show that $\arrowArbExt{h}$
  is a well defined morphism in
  $\catext$ by showing it forms a commutative square. Suppose $x \in
  \left| \Delta \right|$. Then 
  $x \left[ v_{\Delta}\right] \left[ w_{\combineResult}
  \right] = \pi_{\Delta} \left( x \right) \left[ w_{\combineResult} \right] = x \left[
  w_{\Delta} \right] = x \left[ w_0 \right] \left[ w_{\Psi} \right]$ as
  required. Next we need to show that equation~(\ref{EcomposeArrow}) holds. It
  suffices to show that 
  $w_{\combineResult} ; v_{\Phi} = w_{\Phi} $ since it is already required that 
  $w_0 ; u_{\Delta} = w_{\Gamma}$.
  Suppose $y \in \left| \Phi \right|$. There are two possiblities, either $y =
  z \left[ u_{\Phi} \right]$ for some $z \in \left| \Gamma \right|$, or $y \in
  \left| \Phi^+ \right|$ where $\Phi = \Gamma \left[ u_{\Phi} \right] \rtimes
  \Phi^+$. \ If $y \in \left| \Phi^+ \right|$ then $y \left[ v_{\Phi} \right]
  \left[ w_{\combineResult} \right] = \pi_{\Phi^+} \left( y \right) \left[ w_{\combineResult}
  \right] = y \left[ w_{\Phi} \right]$ as requried. \ In case $y = z \left[
  u_{\Phi} \right]$, then $y \left[ v_{\Phi} \right] \left[ w_{\combineResult} \right] =
  z \left[ u_{\Phi} \right] \left[ v_{\Phi} \right] \left[ w_{\combineResult} \right] = z
  \left[ u_{\Delta} \right] \left[ v_{\Delta} \right] \left[ w_{\combineResult} \right] =
  z \left[ u_{\Delta} \right] \left[ w_0 \right] \left[ w_{\Psi} \right] = z
  \left[ w_{\Gamma} \right] \left[ w_{\Psi} \right] = z \left[ u_{\Phi}
  \right] \left[ w_{\Phi} \right] = y \left[ w_{\Phi} \right]$ as requiried.

  Lastly we need to show that the mediating morphism $\arrowArbExt{h}$
  is the unique morphism satisfying
  equation~(\ref{EcomposeArrow}).  Let $\arrowArbExt{j}$ be another
  morphism of $\E$, where $\arrowArbExt{j}$ must have the same shape
  as $\arrowArbExt{h}$, but with $w_{\combineResult}$ replaced
  with $w'_{\combineResult}$.  Suppose that
\[ \arrowArbExt{j}\ ;\ \arrowArbExt{f} = \arrowArbExt{e} \]
  We need to show that $ w'_{\combineResult}  =  w_{\combineResult} $. \
  Suppose $z \in \left| \combineResult \right|$. \ There are two possiblities. \ Either
  $z = x \left[ v_{\Delta} \right]$ for some $x \in \left| \Delta \right|$ or
  $z = y \left[ v_{\Phi} \right]$ for some $y \in \left| \Phi^+ \right|$. \
  Suppose $z = x \left[ v_{\Delta} \right]$. Then $z \left[ w'_{\combineResult} \right] =
  x \left[ v_{\Delta} \right] \left[ w'_{\combineResult} \right] = x \left[ w_0 \right]
  \left[ w_{\Psi} \right] = x \left[ v_{\Delta} \right] \left[ w_{\combineResult} \right]
  = z \left[ w_{\combineResult} \right]$ as required. \ On the other hand, suppose $z = y
  \left[ v_{\Phi} \right]$. Then $z \left[ w'_{\combineResult} \right] = y \left[
  v_{\Phi} \right] \left[ w'_{\combineResult} \right] = y \left[ w_{\Phi} \right] = y
  \left[ v_{\Phi} \right] \left[ w_{\combineResult} \right] = z \left[ w_{\combineResult} \right]$
  as required. So $ w'_{\combineResult} =  w_{\combineResult}$ and
  hence $\arrowArbExt{j} = \arrowArbExt{h}$, as required. \qed
\end{proof}

The above proof illustrates that the mixin operation is characterized by the
properties of a Cartesian lifting in the fibration of embeddings. Notice that a
Cartesian lift is only characterised up to isomorphism. Thus there are
potentially many isomorphic choices for a Cartesian lift, and hence
there are many possible choices for how to mixin an embedding into a view.
This is the underlying reason why the mixin construction requires a pair of
renaming functions. The renaming functions pick out a particular choice of
mixin from the many possibilities. This ability to specify which mixin
to construct is quite important as one cannot simply define a mixin to
be ``the'' Cartesian lift, since ``the'' Cartesian lift is only defined up to
isomorphism.  It is important to remember that for \emph{user syntax}, we
cannot work up to isomorphism!

Next we will see that combine is a special case of mixin.

\begin{theorem}\label{thm:combine}
  Given two embeddings $ u_{\Delta}  : \extend{\Gamma}{\Delta}$
  and $ u_{\Phi}  : \extend{\Gamma}{\Phi}$ and renaming
  functions $\pi_{\Delta} : \left| \Delta \right| \rightarrow \varSet$ and
  $\pi_{\Phi} : \left| \Phi \right| \rightarrow \varSet$ sastifiying the
  requirement of the combine construction, then
\begin{equation}\label{mixcombinerel}
\mixfun\left( u_{\Delta}, u_{\Phi}, \pi_{\Delta}, \pi_{\Phi^{+}}\right) =
\comfun\left( u_{\Delta}, u_{\Phi}, \pi_{\Delta}, \pi_{\Phi} \right)
\end{equation}
\noindent where $\Phi = \Gamma\left[u_{\Phi}\right]\rtimes\Phi^{+}$ and
$\pi_{\Phi{^{+}}} = [x \mapsto \pi_{\Phi}]_{x\in|\Phi^{+}|}$,
and equation~\ref{mixcombinerel} is interpreted component-wise.
\end{theorem}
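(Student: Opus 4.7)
The plan is to show that all five components of the \texttt{mix} and \texttt{com} records coincide by unfolding the definitions and using the hypothesis that $u_{\Delta}$ is an embedding (not merely a view).  First I would decompose $\Delta = \Gamma[u_{\Delta}] \rtimes \Delta^{+}$, which is legitimate exactly because $u_{\Delta}$ is an embedding, as explained at the end of \S\ref{sec:extension}.  This single structural fact is what collapses the mixin construction onto the combine construction.

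Next I would compare the \texttt{pres} field.  For mixin, $\combineResult_{1} = \applyrenfun{\pi_{\Delta}}{\Delta} = \applyrenfun{(u_{\Delta};\pi_{\Delta})}{\Gamma} \rtimes \applyrenfun{\pi_{\Delta}}{\Delta^{+}} = \combineResult_{0} \rtimes \combineResult_{\Delta}$ using the decomposition above; so $\combineResult_{1} \rtimes \combineResult_{2} = \combineResult_{0} \rtimes \combineResult_{\Delta} \rtimes \applyrenfun{\pi'_{\Phi}}{\Phi^{+}}$.  To finish the identification with combine's $\combineResult_{0} \rtimes (\combineResult_{\Delta} \cup \combineResult_{\Phi})$, I need to show $\applyrenfun{\pi'_{\Phi}}{\Phi^{+}}$ agrees with $\combineResult_{\Phi} = \applyrenfun{\pi_{\Phi}}{\Phi^{+}}$.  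Since $y \in |\Phi^{+}|$ in the second branch of the definition of $\pi'_{\Phi}$, we get $\pi'_{\Phi}(y) = \pi_{\Phi^{+}}(y) = \pi_{\Phi}(y)$ by the hypothesis $\pi_{\Phi^{+}} = [x \mapsto \pi_{\Phi}]_{x \in |\Phi^{+}|}$.  The step that requires care is that the combine $\rtimes$/$\cup$ equivalence class absorbs the asymmetric $\rtimes$ used in mixin; this is justified because by construction the two pieces $\combineResult_{\Delta}$ and $\applyrenfun{\pi'_{\Phi}}{\Phi^{+}}$ are well-typed in either order over $\combineResult_{0}$.

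I would then match the embeddings $v_{\Delta}$ and $v_{\Phi}$.  In combine, $v_{\Delta} \define \tilde{\pi}_{\Delta}$ and $v_{\Phi} \define \tilde{\pi}_{\Phi}$; in mixin, $v_{\Delta} \define \tilde{\pi}_{\Delta}$ identically, while $v_{\Phi} \define \tilde{\pi}'_{\Phi}$.  So I only need $\pi'_{\Phi}$ and $\pi_{\Phi}$ to agree on all of $|\Phi|$.  They agree on $|\Phi^{+}|$ as above.  For $y = z[u_{\Phi}]$ with $z \in |\Gamma|$, mixin gives $\pi'_{\Phi}(y) = z[u_{\Delta}][x \mapsto \pi_{\Delta}(x)]_{x \in |\Delta|} = \pi_{\Delta}(z[u_{\Delta}])$ (again using that $z[u_{\Delta}]$ is a variable since $u_{\Delta}$ is an embedding), and combine's condition~\ref{eqn:renaming} forces $\pi_{\Phi}(y) = \pi_{\Phi}(z[u_{\Phi}]) = \pi_{\Delta}(z[u_{\Delta}])$.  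So $\pi'_{\Phi} = \pi_{\Phi}$ on all of $|\Phi|$, giving $v_{\Phi}^{\text{mix}} = v_{\Phi}^{\text{com}}$.  The diagonal $uv = u_{\Delta}; v_{\Delta} = u_{\Phi}; v_{\Phi}$ is then determined and coincides in both constructions.

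Finally I would check \texttt{mediate}: both constructions define $w_{\combineResult}$ by the same two cases on the image of $\pi_{\Delta}$ respectively $\pi_{\Phi}$ (= $\pi'_{\Phi}$), and well-definedness in both rests on the same coherence condition $\composeviews{u_{\Delta}}{w_{\Delta}} \equiv \composeviews{u_{\Phi}}{w_{\Phi}}$, so the two functions produce the same mediating view on every input.  The main obstacle — and really the only non-routine step — is the calculation that $\pi'_{\Phi}$ equals $\pi_{\Phi}$ on the image of $u_{\Phi}$; this is exactly the point where the embedding hypothesis on $u_{\Delta}$ (so that $z[u_{\Delta}]$ is a variable) together with the renaming compatibility~(\ref{eqn:renaming}) for combine both become essential.
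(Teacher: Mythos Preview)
Your argument is correct and, for the first four components (\texttt{pres}, $v_{\Delta}$, $v_{\Phi}$, and the diagonal), it is essentially identical to the paper's: both of you decompose $\Delta = \Gamma[u_{\Delta}]\rtimes\Delta^{+}$ to show $\combineResult_{1}=\combineResult_{0}\rtimes\combineResult_{\Delta}$, and both of you prove $\pi'_{\Phi}=\pi_{\Phi}$ on all of $|\Phi|$ by splitting into the $\Phi^{+}$ case and the $\Gamma[u_{\Phi}]$ case, using condition~(\ref{eqn:renaming}) and the fact that $z[u_{\Delta}]$ is a variable in the latter.

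For \texttt{mediate} you take a slightly different route. The paper does not compare the two explicit formulas for $w_{\combineResult}$; instead it invokes the previous theorem and argues that combine's mediating morphism satisfies $v_{\Delta};w_{\combineResult}=w_{\Delta}$ and $v_{\Phi};w_{\combineResult}=w_{\Phi}$, whence it must coincide with mixin's mediating morphism by uniqueness of the Cartesian lift. Your direct comparison is fine in spirit, but as written it glosses over one point: mixin's union is indexed by $|\Phi^{+}|$ while combine's is indexed by all of $|\Phi|$, so the two formulas are not literally ``the same two cases''. To complete your argument you should observe that the extra clauses in combine's union, those for $y=z[u_{\Phi}]$, are redundant: by~(\ref{eqn:renaming}) the key $\pi_{\Phi}(y)$ equals $\pi_{\Delta}(z[u_{\Delta}])$, which already appears in the $\Delta$-indexed part, and the assigned values agree by the coherence hypothesis $u_{\Delta};w_{\Delta}\equiv u_{\Phi};w_{\Phi}$. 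With that sentence added, your direct argument goes through; the paper's uniqueness argument simply sidesteps the bookkeeping.
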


\begin{proof}
  Suppose that
\[
\comfun\left( u_{\Delta}, u_{\Phi}, \pi_{\Delta}, \pi_{\Phi}\right) =
\left\{\begin{aligned}
  \mathtt{pres} & = \combineResult_0\rtimes\left(\combineResult_{\Delta} \cup \combineResult_{\Phi}\right) \\
  \mathtt{embed}_{\Delta} & = v_{\Delta} : \combineResult\rightarrow\Delta\\
  \mathtt{embed}_{\Phi} & = v_{\Phi} : \combineResult\rightarrow\Phi\\
  \mathtt{diag} & = uv : \combineResult\rightarrow\Gamma\\
  \mathtt{mediate} & = \lambda\ w_{\Delta}\ w_{\Phi}\ .\  w_{\combineResult}
\end{aligned}\right\}\]
and
\[
\mixfun\left( u_{\Delta}, u_{\Phi}, \pi_{\Delta}, \pi_{\Phi^{+}}\right) =
\left\{\begin{aligned}
  \mathtt{pres} & = \combineResult' \\
  \mathtt{embed}_{\Delta} & = v'_{\Delta} : \combineResult'\rightarrow\Delta\\
  \mathtt{view}_{\Phi} & = v'_{\Phi} : \combineResult'\rightarrow\Phi\\
  \mathtt{diag} & = uv' : \combineResult'\rightarrow\Gamma\\
  \mathtt{mediate} & = \lambda\ w_{\Delta}\ w_{\Phi}\ .\  w_{\combineResult'}
\end{aligned}\right\}\]
  Recall that $\combineResult = \combineResult_0 \rtimes \left( \combineResult_{\Delta} \cup \combineResult_{\Phi} \right)
  = \combineResult_0 \rtimes \left(\combineResult_{\Delta} \right) \rtimes \combineResult_{\Phi}$ where\\
  $\combineResult_0
  \define \Gamma \left[ z \rename \pi_{\Delta} \left( z \left[ v_{\Delta}
  \right] \right) \right]_{z \in \left| \Gamma \right|}$, $\combineResult_{\Delta}
  \define \Delta^+ \left[ x \rename \pi_{\Delta} \left( x \right) \right]_{x
  \in \left| \Delta \right|}$, and $\combineResult_{\Phi} \define \Phi^+ \left[ y \rename
  \pi_{\Phi} \left( y \right) \right]_{y \in \left| \Phi \right|}$. In
  particular note that $\combineResult_0 = \Gamma \left[ v_{\Delta} \right] \left[ z
  \left[ v_{\Delta} \right] \rename \pi_{\Delta} \left( z \left[ v_{\Delta}
  \right] \right) \right]_{z \in \left| \Gamma \right|}$. \ Since $\Delta =
  \Gamma \left[ v_{\Delta} \right] \rtimes \Delta^+$, we have that
  \begin{eqnarray*}
    \combineResult_0 \rtimes \combineResult_{\Delta} & = & \Gamma \left[ v_{\Delta} \right] \left[ z
    \left[ v_{\Delta} \right] \rename \pi_{\Delta} \left( z \left[ v_{\Delta}
    \right] \right) \right]_{z \in \left| \Gamma \right|} \rtimes \Delta^+
    \left[ x \rename \pi_{\Delta} \left( x \right) \right]_{x \in \left|
    \Delta \right|}\\
    & = & \left( \Gamma \left[ v_{\Delta} \right] \rtimes \Delta^+ \right)
    \left[ x \rename \pi_{\Delta} \left( x \right) \right]_{x \in \left|
    \Delta \right|}\\
    & = & \Delta \left[ x \rename \pi_{\Delta} \left( x \right) \right]_{x
    \in \left| \Delta \right|}
  \end{eqnarray*}
  Recall also that $\combineResult' = \combineResult'_1 \rtimes \combineResult'_2$ where $\combineResult_1' \define \Delta
  \left[ x \rename \pi_{\Delta} \left( x \right) \right]_{x \in \left| \Delta
  \right|}$ and $\combineResult'_2 \define \Phi^{\upl} \left[ y \rename
  {\pi'_{\Phi^{\upl}}} \left( y \right) \right]_{y \in \left| \Phi
  \right|}$. So we see that $\combineResult_1' = \combineResult_0 \rtimes \combineResult_{\Delta}$.

  Next we show that ${\pi'_{\Phi}} = \pi_{\Phi}$. If $y \in
  \left| \Phi \right|$ then either $y \in \left| \Phi^+ \right|$ or there is
  some $z \in \Gamma$ such that $y = z \left[ v_{\Phi} \right]$. If $y \in
  \left| \Phi^+ \right|$ then ${\pi'_{\Phi}} \left( y \right) =
  \pi_{\Phi} \left( y \right) = \pi_{\Phi} \left( y \right)$. If $y = z
  \left[ v_{\Phi} \right]$, then ${\pi'_{\Phi}} \left( y \right)
  = z \left[ u_{\Delta} \right] \left[ x \rename \pi_{\Delta} \left( x \right)
  \right]_{x \in \left| \Delta \right|} = \pi_{\Delta} \left( z \left[
  u_{\Delta} \right] \right) = \pi_{\Phi} \left( z \left[ u_{\Phi} \right]
  \right) = \pi_{\Phi} \left( y \right)$. Therefore $\combineResult_2' = \combineResult_{\Phi}$ and
  hence $\combineResult' = \combineResult$.

  Next we need to show that $ v'_{\Delta}  =  v_{\Delta}
  $ and $ v'_{\Phi}  =  v_{\Phi} $. First we
  see that $ v'_{\Delta} $ and $ v_{\Delta} $ are
  both defined to be $\left[ x \rename \pi_{\Delta} \left( x \right)
  \right]_{x \in \left| \Delta \right|}$, so clearly they are equal. \ Next we
  see that $v_{\Phi} \define \left[ y \rename \pi_{\Phi} \left(
  y \right) \right]_{y \in \left| \Phi \right|}$ and $ v'_{\Phi} 
  \define \left[ y \rename {\pi'_{\Phi}} \left( y \right) \right]_{y
  \in \left| \Phi \right|}$ are equal because ${\pi'_{\Phi}} =
  \pi_{\Phi}$.  This also gives that $uv = uv'$.

  Lastly we show that the mediating morphism of the combine is the same as the
  mediating morphism of the mixin. \ Suppose we are given $ w_{\Delta}
   : \Delta \rightarrow \Omega$ and $ w_{\Phi}  : \Phi
  \rightarrow \Omega$ such that $ u_{\Delta}  ;  w_{\Delta}
   =  u_{\Phi}  ;  w_{\Phi}  : \Gamma
  \rightarrow \Omega$. To show that the mediating morphism produced by combine,
  $ w_{\combineResult}  : \combineResult \rightarrow \Omega$ is the same as the
  medating morphism produced by the mixin, it suffices to prove that the mediating
  morphism satifies the universal property of the Cartesian lift, since such a
  morphism is unique. Thus it suffices to show that $ v_{\Phi}  ;
   w_{\combineResult}  =  w_{\Phi}  : \Phi \rightarrow \Omega$
  and $ v_{\Delta}  ;  w_{\combineResult}  =  w_{\Delta}
   : \Delta \rightarrow \Omega$. Let $y \in \left| \Phi \right|$. \
  Then $y \left[ v_{\Phi} \right] \left[ w_{\combineResult} \right] = \pi_{\Phi} \left( y
  \right) \left[ w_{\combineResult} \right] = y \left[ w_{\Phi} \right]$. Let $x \in
  \left| \Delta \right|$. Then $x \left[ v_{\Delta} \right] \left[ w_{\combineResult}
  \right] = \pi_{\Delta} \left( x \right) \left[ w_{\combineResult} \right] = x \left[
  w_{\Delta} \right]$ as required.
\end{proof}

Combine is rather well-behaved.  In particular,
\begin{proposition}\label{combine-commutative}
$\comfun\left( u_{\Delta}, u_{\Phi}, \pi_{\Delta}, \pi_{\Phi}\right) =
\comfun\left( u_{\Phi}, u_{\Delta}, \pi_{\Phi}, \pi_{\Delta}\right)$,
i.e.  combine is commutative.
\end{proposition}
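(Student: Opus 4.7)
The plan is to verify that each of the five fields of the record returned by $\comfun$ is invariant under swapping $(\Delta, u_\Delta, \pi_\Delta)\leftrightarrow(\Phi, u_\Phi, \pi_\Phi)$, after first checking that the admissibility hypothesis is itself symmetric. Condition~(\ref{eqn:renaming}) is the biconditional
\[
\pi_{\Delta}(x)=\pi_{\Phi}(y)\iff \exists z\in|\Gamma|.\,x=\substitution{z}{u_{\Delta}}{}\wedge y=\substitution{z}{u_{\Phi}}{},
\]
which is plainly symmetric in the two sides, so the swapped call is defined precisely when the original is.

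Next I would walk through the \texttt{pres} field. The base $\combineResult_0=\applyrenfun{(u_\Delta;\pi_\Delta)}{\Gamma}$ is by construction equal to $\applyrenfun{(u_\Phi;\pi_\Phi)}{\Gamma}$, as noted immediately after the combine definition, so it is invariant under the swap. The extensions $\combineResult_\Delta=\applyrenfun{\pi_\Delta}{\Delta^+}$ and $\combineResult_\Phi=\applyrenfun{\pi_\Phi}{\Phi^+}$ are each defined purely from the data on their own side, so they merely exchange labels. The paper itself defines $\combineResult_0\rtimes(\combineResult_\Delta\cup\combineResult_\Phi)$ as the equivalence class identifying $\combineResult_0\rtimes\combineResult_\Delta\rtimes\combineResult_\Phi$ with $\combineResult_0\rtimes\combineResult_\Phi\rtimes\combineResult_\Delta$, so the presentation is invariant at the level at which $\comfun$ is specified.

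For the remaining components: $\mathtt{embed}_\Delta=\tilde{\pi}_\Delta$ and $\mathtt{embed}_\Phi=\tilde{\pi}_\Phi$ depend only on their own renaming, so they simply swap roles ($\mathtt{embed}_\Delta$ of one call becomes $\mathtt{embed}_\Phi$ of the other). The diagonal $uv$ is characterized by the symmetric identity $\composeviews{u_\Delta}{v_\Delta}=\composeviews{u_\Phi}{v_\Phi}$ and hence is invariant. The mediator is the assignment
\[
\substitution{}{\pi_{\Delta}(x):=\substitution{x}{w_\Delta}{}}{x\in\syms{\Delta}}\ \cup\ \substitution{}{\pi_{\Phi}(y):=\substitution{y}{w_\Phi}{}}{y\in\syms{\Phi}},
\]
which is literally symmetric in its two clauses, and whose well-definedness argument (appealing to condition~(\ref{eqn:renaming}) and to equivalence of $\composeviews{u_\Delta}{w_\Delta}$ with $\composeviews{u_\Phi}{w_\Phi}$) is unchanged under the swap.

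The only subtle point, and the main thing to argue carefully, is the sense of equality at issue. The two calls produce presentations whose declarations are serialized in opposite orders within the $\combineResult_\Delta\cup\combineResult_\Phi$ block; equality must therefore be read modulo the equivalence of independent extensions noted after the combine construction (and realized in practice by the topological-sort rendering described in the footnote). Once this reading is in place, every field matches literally and the proposition follows by direct inspection; alternatively, one may invoke Theorem~\ref{thm:combine} to reduce the claim to the essential uniqueness of Cartesian lifts and to the symmetric characterization of pushouts in $\catctx$.
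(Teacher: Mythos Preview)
Your proposal is correct and follows the same route as the paper, which does not supply a formal proof of this proposition at all: the paper simply records, at the end of \S\ref{sec:combine}, that ``this construction is symmetric in $\Delta$ and $\Phi$'' and then states the proposition without further argument. Your field-by-field verification is exactly the unpacking of that remark, and your care about the sense of equality (the $\cup$ being an equivalence class of orderings, as the paper's footnote makes explicit) addresses the only point that could cause trouble.
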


It turns out that combine also satisfies an appropriate notion of
associativity.  In other words, we can compute limits of cones of
embeddings.

\subsection{No Lifting Views over Views}

Why do we restrict ourselves to the fibration of embeddings? Why not allow
mixins of arbitrary views over arbitrary views? If such mixins
were allowed, then the notion of a Cartesian lifting reduces to that of
pullback. But to demand that the category of contexts and views be closed under
\emph{all} pullbacks would require too much from our type theory: we would
need to have all equalizers (as we already have all products).  In particular,
at the type level, this would force us to have subset types, which is something
we are not willing to impose.  Thus a restriction is
needed, and our proposed restriction of only mixing in embeddings into
views appears to be quite practical.  Taylor~\cite{taylor1999practical} is
a good source of further reasons for the naturality of restricting to this
case.

\section{Presentation Combinators}
\label{sec:tpc2}

We can now reconstruct a syntax for our theory presentation combinators,
based on the semantics of the previous sections.  Rather than attempt
to patch our previous syntax, we directly use the semantics to guide us.
In fact, \emph{guide} is the wrong word: we let the semantics induce
the syntax, instead of letting our na\"{\i}ve intuition about what
combinators might be useful dictate the syntax.

\subsection{Grammar}
\label{sec:grammar}

In the definition of the grammar,
we use $A,B$ to denote theories and/or views,
$x$ and $y$ to denote symbols, $t$ for terms of the
underlying type theory, and $l$ for (raw) contexts from the
underlying type theory.

\begin{minipage}[t]{0.45\textwidth}
\begin{align*}
\mathrm{tpc} \Coloneqq
   &\  \mathsf{Empty} \\
 | &\  \mathsf{Theory\ }\{ l \} \\
 | &\ \mathsf{extend}\  A\  \mathsf{by}\  \{ l \} \\
 | &\  \mathsf{combine}\  A\  r_1,\  B\  r_2 \\
 | &\ \mathsf{mixin}\ A\ r_1,\ B\ r_2\\
 | &\ \mathsf{view}\ A\ \mathsf{as}\ B\ \mathsf{via}\ \mathrm{v} \\
 | &\  A\ ;\ B \\
 | &\  A\ \mathrm{r}
\end{align*}
\end{minipage}
\begin{minipage}[t]{0.45\textwidth}
\begin{align*}
\mathrm{r} \Coloneqq &\ \left[ \mathrm{ren} \right] \\
\mathrm{v} \Coloneqq &\ \left[ \mathrm{assign} \right] \\
\mathrm{ren} \Coloneqq &\ x \mapsto y \\
 | &\ \mathrm{ren},\ x \mapsto y \\
\mathrm{assign} \Coloneqq &\ x \mapsto t \\
 | &\ \mathrm{assign},\ x \mapsto t
\end{align*}
\vfill
\end{minipage}

Informally, these forms correspond to the empty theory, an explicit
theory, a theory extension, combining two extensions, mixing in a
view and an extension, explicit view, sequencing views, and renaming.

What might be surprising is that we do not have a separate language for
presentations and views.  This is because our language does not have
a single semantics in terms of presentations, embeddings or views, but rather
has \emph{several} compatible semantics.  In other words, our syntax will
yield objects of $\catctx$, objects of $\catext$ (i.e. embeddings) and
morphisms of $\catctx$ (views).

The semantics is given by defining three partial maps,
$\semC{-} :\ \mathsf{tpc} \partialf |\catctx|$,
$\semE{-} :\ \mathsf{tpc} \partialf |\E|$,
$\semB{-} :\ \mathsf{tpc} \partialf \mathrm{Hom}_{\catctx}$.  This is done
by simultaneous structural recusion.  We also use $\semP{-}$ for the
straightforward semantics in $\V\rightarrow\V$ of a renaming.

\begin{align*}
\semC{-} :\ & \mathsf{tpc} \partialf |\catctx| \\
\semC{\mathsf{Empty}} =\  & \EmptyThy\\
\semC{\mathsf{Theory\ }\{ l \}} =\ & l \qquad\qquad \text{when}\ l\ \wfctx\\
\semC{\mathsf{extend\ }A\mathsf{\ by\ }\{ l \}} =\  &
    \extension{\semC{A}}{l}.\mathtt{pres} \\
\semC{\mathsf{combine\ } A_1 r_{1},\ A_2 r_{2}} =\  &
    \comfun\left(\semE{A_1}, \semE{A_2}, \semP{r_1}, \semP{r_2}\right).\mathtt{pres} \\
\semC{\mathsf{mixin\ } A_1 r_{1},\ A_2 r_{2}} =\  &
    \mixfun\left(\semB{A_1}, \semE{A_2}, \semP{r_1}, \semP{r_2}\right).\mathtt{pres} \\
\semC{\mathsf{view}\ A\ \mathsf{as}\ B\ \mathsf{via}\ \mathrm{v}}
    =\ & \bot \\
\semC{A ; B} =\  & \mathsf{cod}\ \semB{A ; B} \\
\semC{A\ r} =\  & \renameOp{\semC{A}}{\semP{r}}.\mathtt{pres} \\
\end{align*}

Recall that objects of $\E$ corresponds to those morphisms of $\catctx$ (i.e.
views) which are in fact embeddings.

\begin{align*}
\semE{-} :\ & \mathsf{tpc} \partialf |\E| \\
\semE{\mathsf{Empty}} =\  &
    \mathbb{I}_{\EmptyThy}\\
\semE{\mathsf{Theory\ }\{ l \}} =\ &
  !_{l} : \emptyview\rightarrow \semC{l}\\
\semE{\mathsf{extend\ }A\mathsf{\ by\ }\{ l \}} =\  &
    \extension{\semC{A}}{l}.\mathtt{embed} \\
\semE{\mathsf{combine\ } A_1 r_{1}, A_2 r_{2}} =\  &
    \comfun\left(\semE{A_1}, \semE{A_2}, \semP{r_1}, \semP{r_2}\right).\mathtt{diag} \\
\semE{\mathsf{mixin\ } A_1 r_{1}, A_2 r_{2}} =\  &
    \bot \\
\semE{\mathsf{view}\ A\ \mathsf{as}\ B\ \mathsf{via}\ \mathrm{v}}
    =\ & \bot \\
\semE{A ; B} =\  & \semE{A} ; \semE{B} \\
\semE{A\ r} =\  & \renameOp{\semC{A}}{\semP{r}}.\mathtt{embed} \\
\end{align*}

Lastly, morphisms of $\catctx$ are views.
\begin{align*}
\semB{-} :\ & \mathsf{tpc} \partialf \mathrm{Hom}_{\catctx} \\
\semB{\mathsf{Empty}} =\  &
    \mathbb{I}_{\EmptyThy}\\
\semB{\mathsf{Theory\ }\{ l \}} =\ &
  !_{l} : \emptyview\rightarrow \semC{l}\\
\semB{\mathsf{extend\ }A\mathsf{\ by\ }\{ l \}} =\  &
    \extension{\semC{A}}{l}.\mathtt{embed} \\
\semB{\mathsf{combine\ } A_1 r_{1}, A_2 r_{2}} =\  &
    \comfun\left(\semE{A_1}, \semE{A_2}, \semP{r_1}, \semP{r_2}\right).\mathtt{diag} \\
\semB{\mathsf{mixin\ } A_1 r_{1}, A_2 r_{2}} =\  &
    \comfun\left(\semB{A_1}, \semE{A_2}, \semP{r_1}, \semP{r_2}\right).\mathtt{diag} \\
\semB{\mathsf{view}\ A\ \mathsf{as}\ B\ \mathsf{via}\ \mathrm{v}}
    =\ & [v] : \semC{A} \rightarrow \semC{B} \\
\semB{A ; B} =\  & \semB{A} ; \semB{B} \\
\semB{A\ r} =\  & \renameOp{\semC{A}}{\semP{r}}.\mathtt{embed} \\
\end{align*}

All rules are strictly compositional except for $\semC{A;B}$, but this is
ok since $\semB{A;B}$ is compositional.

We thus get $3$ \emph{elaborators}, as the members of $|\catctx|$,
$|\E|$ and $|\mathrm{Hom}_{\catctx}|$ can all be represented syntactically
in the underlying type theory.

Note that we could have interpreted
$\semC{\mathsf{view}\ A\ \mathsf{as}\ B\ \mathsf{via}\ \mathrm{v}}$ as
$\mathsf{cod} \semB{\mathsf{view}\ A\ \mathsf{as}\ B\ \mathsf{via}\ \mathrm{v}}$,
rather than as $\bot$, but this is not actually helpful, since this is
just $\semC{B}$,  which is not actually what we want.  What we would really
want is the result of doing the substitution $v$ into $A$, but the resulting
presentation may no longer be well-formed.
So we chose to interpret the attempt to take the object
component of a view as a specification error.  Similarly, even though we can
give an interpretation as an embedding for mixin when $A_1$ turns out
to be an embedding, and also for an embedding $r$ in a view context
(i.e.  $\mathsf{view}\ A\ \mathsf{as}\ B\ \mathsf{via}\ \mathrm{r}$),
we also choose to make these specification errors as well.

We should also note here that in our implementation, we allow raw renamings
($[\mathrm{ren}]$) and assignments ($[\mathrm{assign}]$) to be named, for easier
reuse.  While renamings can be given a simple categorical semantics (they
induce a natural transformation on $\catctx$), assignments really need to be
interpreted contextually since this requires checking that terms $t$ are
well-typed.

Furthermore, we add a bit of syntactic sugar: $A || B$ stands for
$\mathsf{combine\ }A\ [], B\ []$, a rather common situation.

\subsection{Type System}
\label{sec:types}

\newcommand{\thryty}{\tmop{Th}}
\newcommand{\extty}[2]{\tmop{Emb\ #1\ #2}}
\newcommand{\viewty}[2]{\tmop{View\ #1\ #2}}
\newcommand{\permty}[1]{\tmop{Perm\ #1}}
\newcommand{\assignty}[2]{\tmop{Assign\ #1\ #2}}
\newcommand{\toctx}[1]{\tmop{#1_{ctx}}}
\newcommand{\toctxTy}{\tmop{C\ :\ Th \to Ctx}}
\renewcommand{\assignty}[2]{\tmop{Assign\ #1 \ #2}}
\newcommand{\combine}{\tmop{combine}\ A\ r_1\ B\ r_2}
\newcommand{\mixin}{\tmop{mixin}\ A\ r_1\ B\ r_2}

\renewcommand{\empty}{\tmop{Empty}}
\newcommand{\invariant}{
  \forall x \in |A|, y \in |B| .
  r_1 \left( x \right) = r_2 \left( y \right)
  \Leftrightarrow \exists z \in \left| C \right|.\ x =
  \substitution{z}{A}{} \wedge y = \substitution{z}{B}{}}
\newcommand{\combineRes}{(C[A] \rtimes r_1 \cdot A^+ \rtimes r_2 \cdot B^+)}

\begin{figure}[th]
\begin{proofrules}
\[
\emptyset \subseteq T 
\justifies 
[] : \lstinline|Perm T|
\]  
\[
P : \lstinline|Perm S| \qquad 
S \cap \{x,y\} = \emptyset \qquad S \cup \{x,y\} \subset T
\justifies 
P,[x \leftrightarrow y] : \lstinline|Perm T|
\]
\end{proofrules}  

\begin{proofrules}
\[
A : \thryty \qquad
B : \thryty \qquad 
\toctx{B} \vdash x_i : \sigma_i[x_1,\cdots,x_{i-1}]\qquad \linebreak 
\toctx{A} \vdash y_i : \sigma_i[y_1/x_1,\cdots,y_{i-1}/x_{i-1}]
\justifies
[x_i \mapsto y_i]_{i\leq n} : \assignty{A}{B}
\]
\end{proofrules}

\caption{Types for permutations and assignments}
\label{fig:permandassign}
\end{figure}

We build a type system, whose purpose is to insure that well-typed
expressions are denoting.  Note that although we requite that
the underlying system have kinds, to enable the declaration of new
types, we omit this below for clarity. Adding this is straightforward.

First, we need a couple of preliminary
types, shown in Figure~\ref{fig:permandassign}.  We use $S$ and $T$ to
denote finite sets of variable from $\vars$, and $\toctx{A}$ means
the well-formed context that theory $A$ elaborates to.  The rule
for assignments is otherwise the standard one for morphisms of
the category of contexts (p.602 of~\cite{JacobsCLTT}).

\begin{figure}
\begin{proofrules}
\[
\ \justifies 
\empty : \extty{\emptyset}{\emptyset}
\]
\[
\vdash \ell\ \wfctx
\justifies 
\tmop{Theory\ \{\ell\}} : \extty{\emptyset}{\ell}
\]
\[
A : \thryty \qquad 
r : \permty{\emptyset}
\justifies
A\ r : \extty{A}{(A\ r)}
\]
\\ [5ex]
\[
A : \thryty
\justifies 
\tmop{extend\ A\ by\ \{\}} : \extty{A}{A}
\]
\[
B = \tmop{extend\ A\ by\ \{\ell\}}  \qquad B :\thryty \qquad
x \notin \syms{B} \qquad 
\toctx{B} \vdash t : type 
\justifies 
\tmop{extend\ A\ by\ \{\ell,\ x\ :\ t\}} : \extty{A}{(B,x:t)}
\]
\\ [5ex]
\infer{\combine :  \extty{C}{\combineRes}}{
  \deduce{
    A : \extty{C}{A^\prime} \qquad
    B : \extty{C}{B^\prime} \qquad
    r_1, r_2 : \permty{\emptyset} \qquad 
    \invariant}{
    C, A^\prime, B^\prime : \thryty \qquad 
    A^\prime = C[A] \rtimes A^+ \qquad 
    B^\prime = C[B] \rtimes B^+ \qquad }
}
\end{proofrules}
\caption{Types for core combinators}
\label{fig:corecomb}
\end{figure}

We need to define $3$ sets of typing rules, one for each
semantic. The rules are extremely similar to each other for most
of the combinators, and thus we give a lighter presentation
by grouping the similar ones together.  More specifically,
we introduce judgement for $3$ new types: $\thryty$ for
theory presentations, $\extty{A}{B}$ for embeddings
from presentation $A$ to presentation $B$, and
$\viewty{A}{B}$ for views from presentation $A$ to
presentation $B$.

Figure~\ref{fig:corecomb} shows the judgements for $\extty{A}{B}$ for
all combinators except for \tmop{mixin}, \tmop{view} and sequential composition
\tmop{;}.  The judgements for $\thryty$ are obtained from these by
replacing the final $: \extty{A}{B}$ with $: \thryty$. The judgements
are defined by mutual recursion: a \tmop{combine}, even elaborated as
a theory, does take two views as arguments. The judgements for
$\extty{A}{B}$ are obtained by replacing the final $: \extty{A}{B}$ with
$: \viewty{A}{B}$ (recall that all embeddings are views).

\begin{figure}[th]
\begin{proofrules}
\[
A, B, C : \thryty \qquad
X : \viewty{A}{B} \qquad
Y : \viewty{B}{C} \qquad
\justifies
X ; Y : \thryty
\]
\[
A, B, C : \thryty \qquad
X : \extty{A}{B} \qquad
Y : \extty{B}{C} \qquad
\justifies
X ; Y : \extty{A}{C}
\]
\[
A, B, C : \thryty \qquad
X : \viewty{A}{B} \qquad
Y : \viewty{B}{C} \qquad
\justifies
X ; Y : \viewty{A}{C}
\]
\end{proofrules}
\caption{Types for composition ($;$)}
\label{fig:compo}
\end{figure}

Figure~\ref{fig:compo} shows the $3$ judgements for $;$. These
are also mutually recursive, but in a non-uniform pattern. Also,
in light of Proposition~\ref{emb-are-views} below, the first and
third rules really stand for $4$ rules each.

\begin{figure}[th]
\begin{proofrules}
\infer{\mixin : \thryty}{
  \deduce{
    A : \viewty{C}{A^\prime} \qquad
    B : \extty{C}{B^\prime} \qquad
    r_1, r_2 : \permty{\emptyset} \qquad 
    \invariant}{
    C, A^\prime, B^\prime : \thryty \qquad 
    A^\prime = C[A] \rtimes A^+ \qquad 
    B^\prime = C[B] \rtimes B^+ \qquad }
}
\infer{\mixin :  \viewty{C}{\combineRes}}{
  \deduce{
    A : \viewty{C}{A^\prime} \qquad
    B : \extty{C}{B^\prime} \qquad
    r_1, r_2 : \permty{\emptyset} \qquad 
    \invariant}{
    C, A^\prime, B^\prime : \thryty \qquad 
    A^\prime = C[A] \rtimes A^+ \qquad 
    B^\prime = C[B] \rtimes B^+ \qquad }
}
\[
A : \thryty \qquad 
B : \thryty \qquad 
v : \assignty{A}{B}
\justifies 
\tmop{view}\ A\ \tmop{as}\ B\ \tmop{as}\ v : \viewty{A}{B}
\]
\end{proofrules}
\caption{Types for mixin and view}
\label{fig:mixin-view}
\end{figure}

The rules for \tmop{mixin} (in Figure~\ref{fig:mixin-view}) are
very similar to those for \tmop{combine}, except the the first argument
$A$ is now a view, and the result is either a presentation or a view.
A $\tmop{view}$ of course just elaborates to a view, but from an
assignment.

Of course, we then have a basic soundness result:
\begin{theorem} The following hold: \\
\begin{itemize}
\item If $C : \thryty$ then $\semC{C}$ is defined.
\item If $C : \extty{A}{B}$ then $\semE{C}$ is defined.
\item If $C : \viewty{A}{B}$ then $\semB{C}$ is defined.
\end{itemize}
\end{theorem}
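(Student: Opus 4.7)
The plan is to proceed by simultaneous induction on the three mutually-defined typing derivations, exploiting the fact that the typing rules and the semantic maps $\semC{-}, \semE{-}, \semB{-}$ share essentially the same grammar-driven structure. For every combinator the typing rule packages exactly the hypotheses that the corresponding clause of the semantics needs, so the induction hypothesis will supply definedness of every sub-interpretation, leaving only the local well-formedness checks to verify.

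For the base cases $\tmop{Empty}$ and $\tmop{Theory\ \{\ell\}}$, the interpretations are $\EmptyThy$ and the context $\ell$ itself, and the premise $\vdash \ell\ \wfctx$ is precisely what is needed. For $A\ r$, the induction hypothesis gives $\semC{A}$ defined, and the typing rule for $\tmop{Perm}$ guarantees that $\semP{r}$ is an injective renaming on $\vars$ with finite support, which is all the rename construction of Section~\ref{sec:combinators} requires. Sequential composition is discharged straight from the induction hypothesis on each factor and the total composition of morphisms (resp.\ embeddings) in $\catctx$ and $\catext$.

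The substance of the argument lies in the \emph{extend}, \emph{combine}, \emph{mixin}, and \emph{view} cases. For extension, the typing rule provides $\toctx{B} \vdash t : type$, which is exactly the premise required by the formation rules of Figure~\ref{fig:ctx}, so the construction of Section~\ref{sec:extension} returns a record with defined \texttt{pres} and \texttt{embed} fields. For $\tmop{combine}\ A\ r_1\ B\ r_2$, the induction hypothesis gives embeddings for $A$ and $B$, and the typing rule carries the invariant $\forall x \in |A|, y \in |B|.\ r_1(x) = r_2(y) \Leftrightarrow \exists z \in |C|.\ x = \substitution{z}{A}{} \wedge y = \substitution{z}{B}{}$, which is precisely condition~\ref{eqn:renaming}; thus $\comfun$ is defined and delivers the required \texttt{pres} and \texttt{diag} components. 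The mixin case is entirely analogous, with $A$ now providing a view by induction while $B$ still provides an embedding, matching the preconditions of $\mixfun$ established in Section~\ref{sec:combinators}. For $\tmop{view}\ A\ \tmop{as}\ B\ \tmop{via}\ v$, an assignment of type $\assignty{A}{B}$ elaborates position-by-position to a well-typed substitution, yielding a morphism in the sense of Figure~\ref{fig:views}.

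The main obstacle is the tight interleaving of the three judgements: the typing rule for $\thryty$ on $\tmop{combine}$ dispatches to subderivations of $\extty{-}{-}$, and the rule for $\tmop{mixin}$ mixes $\viewty{-}{-}$ and $\extty{-}{-}$ premises, so the three definedness statements must be proved \emph{simultaneously} rather than sequentially. A related subtlety is the implicit coercion that every embedding is a view (Proposition~\ref{emb-are-views}); we must arrange the induction so that this coercion is invoked uniformly — most cleanly by proving a single lemma stating that whenever $\semE{C}$ is defined, $\semB{C}$ coincides with it — so that the composition and mixin cases do not fragment into a combinatorial explosion of sub-cases.
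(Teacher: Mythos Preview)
The paper states this theorem as a ``basic soundness result'' and gives no proof, treating it as routine. Your approach---simultaneous induction on the three mutually-defined typing derivations, verifying for each combinator that the premises of the typing rule supply precisely the side-conditions needed by the corresponding semantic clause---is the natural one, and is essentially correct.

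One small point you glossed over: in the composition case you say the result follows ``straight from the induction hypothesis on each factor and the total composition of morphisms'', but the bare induction hypothesis only gives that $\semE{X}$ and $\semE{Y}$ (resp.\ $\semB{X}$ and $\semB{Y}$) are each defined, not that they are composable. To close this you need to strengthen the inductive statement to record the source and target, i.e.\ prove simultaneously that $C : \extty{A}{B}$ implies $\semE{C}$ is an embedding $\semC{A} \to \semC{B}$, and similarly for views. The Propositions that the paper states immediately after the theorem are part of this stronger invariant (they give the codomain half), and your argument really needs them folded into the simultaneous induction rather than derived afterwards. With that strengthening the typing premises $X : \extty{A}{B}$, $Y : \extty{B}{C}$ do guarantee composability, and everything goes through.
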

As can be seen both in the algorithms and in the rules above,
the interpretation as a presentation of an embedding or a view
are the target theories themselves:
\begin{proposition}
If $C : \extty{A}{B}$ then $\semC{C} = \semC{B}$.
\end{proposition}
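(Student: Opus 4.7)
The plan is to proceed by induction on the derivation of $C : \extty{A}{B}$, performing a case analysis on the last rule applied in the derivation (the seven rules from Figures~\ref{fig:corecomb} and~\ref{fig:compo} that conclude with an $\extty{-}{-}$ judgement). In each case, I unfold the definition of $\semC{-}$ on the syntactic constructor and show, using the explicit construction described in Section~\ref{sec:combinators}, that the result coincides with $\semC{B}$. Since embeddings form a subclass of views, I will also need an auxiliary observation: for every $C : \extty{A}{B}$ the view $\semE{C}$ is a morphism in $\E$ whose codomain (in $\catctx$, i.e.\ its source theory in $\catthry$) is $\semC{C}$; this is established by the same induction but is immediate from the definitions and therefore need not be stated as a separate lemma.

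The base cases ($\empty$, $\mathsf{Theory}\{\ell\}$, $\tmop{extend\ }A\tmop{\ by\ \{\}}$) and the renaming case ($A\ r$) are immediate from the defining clauses of $\semC{-}$: in each of these the target theory named in the conclusion of the typing rule is syntactically identical to the expression whose $\semC{-}$ interpretation is taken. The nonempty $\tmop{extend}$ case is handled by a secondary induction on the length of the added declarations $\ell$, using the inductive hypothesis on the subderivation for $B = \tmop{extend}\ A\ \tmop{by}\ \{\ell\}$ and the fact that $\extension{\semC{A}}{\ell}.\mathtt{pres}$ agrees, up to syntactic equality of contexts, with appending the declaration $x:t$.

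The combine case is the first place where actual work appears. Here $C \equiv \combine$, the typing rule assigns target $(C[A]\rtimes r_1\cdot A^+\rtimes r_2\cdot B^+)$, and $\semC{C} = \comfun(\semE{A},\semE{B},\semP{r_1},\semP{r_2}).\mathtt{pres}$. Unfolding the explicit construction from Section~\ref{sec:combine} gives $\combineResult_0 \rtimes(\combineResult_\Delta \cup \combineResult_\Phi)$ with $\combineResult_0 = \applyrenfun{u_\Delta;\pi_\Delta}{C}$, $\combineResult_\Delta = \applyrenfun{\pi_\Delta}{A^+}$, and $\combineResult_\Phi = \applyrenfun{\pi_\Phi}{B^+}$; this is exactly the target theory, modulo the equivalence of presentations that differ only by reordering of extensions (the footnote in Section~\ref{sec:combine} already commits to a canonical representative). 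The condition~\ref{eqn:renaming} required by the typing rule is precisely what makes $\combineResult_0$ well-defined.

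The composition case is where the plan interacts with the semantics of $A;B$ most delicately and is, in my view, the main obstacle. For $X : \extty{A}{B}$ and $Y : \extty{B}{C}$ we have $\semC{X;Y} = \mathsf{cod}\ \semB{X;Y} = \mathsf{cod}(\semB{X};\semB{Y})$. The desired conclusion $\semC{X;Y} = \semC{C}$ follows from the auxiliary observation mentioned above, applied to $Y$, together with the fact that codomain of a composite view is the codomain of its second factor; but verifying this requires one to keep straight the $\catctx$ versus $\catthry$ conventions and to check that $\semB{Y}$ and $\semE{Y}$ agree as morphisms (which holds because embeddings are views and both clauses of the semantics yield the same underlying morphism). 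Once this is done carefully, the induction closes and the proposition follows.
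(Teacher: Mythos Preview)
The paper does not spell out a proof; the proposition is stated as evident from the definitions, with the preceding sentence ``As can be seen both in the algorithms and in the rules above, the interpretation as a presentation of an embedding or a view are the target theories themselves'' serving as the entire justification. Your induction on the typing derivation is the standard way to make such a claim precise, and your case analysis is the right one.

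There is, however, a directional slip in your auxiliary observation. You write that for $C : \extty{A}{B}$ the embedding $\semE{C}$ has ``codomain (in $\catctx$, i.e.\ its source theory in $\catthry$)'' equal to $\semC{C}$. But the $\catthry$-source of an embedding typed $\extty{A}{B}$ is $\semC{A}$, whereas the proposition asserts $\semC{C} = \semC{B}$, the $\catthry$-\emph{target}. What you actually need for the composition case---and what you in fact invoke correctly there---is that $\mathsf{cod}\,\semB{Y} = \semC{Y}$ with $\mathsf{cod}$ taken in $\catthry$ (equivalently, the $\catctx$-domain). That identity is immediate clause by clause (for $\mathsf{extend}$, $\mathsf{combine}$, $A\,r$, $\mathsf{Theory}\{\ell\}$ it is read off from the records returned by $\extension{-}{-}$, $\comfun$, $\renameOp{-}{-}$; for composition it is the defining clause of $\semC{-}$ itself), and together with the inductive hypothesis $\semC{Y} = \semC{C}$ it closes the composition case. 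So your argument is sound once the auxiliary statement is corrected to speak of the $\catthry$-codomain rather than the $\catctx$-codomain. (Minor terminological point: $\semE{C}$ is an \emph{object} of $\E$, not a morphism in $\E$; as a morphism it lives in $\catthry$.)
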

\begin{proposition}
If $C : \viewty{A}{B}$ then $\semC{C} = \semC{B}$.
\end{proposition}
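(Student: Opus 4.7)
My plan is structural induction on the typing derivation of $C : \viewty{A}{B}$. The view judgement is introduced by four families of rules: embeddings lifted to views (every $\extty{A}{B}$ derivation also yields a $\viewty{A}{B}$ derivation, per the convention stated in the paragraph on Figure~\ref{fig:corecomb}), sequential composition $X;Y$, the $\tmop{mixin}$ constructor, and the explicit $\tmop{view}$ constructor.

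The embedding-subsumed cases reduce immediately to the preceding proposition, giving $\semC{C} = \semC{B}$ verbatim. For sequential composition $C = X;Y$ with intermediate theory $B'$, the semantic clause $\semC{X;Y} = \mathsf{cod}\ \semB{X;Y} = \mathsf{cod}(\semB{X} ; \semB{Y})$ reduces the claim to the observation that $\semB{Y}$ has codomain $\semC{B}$, which follows from the typing of $Y : \viewty{B'}{B}$ together with the interpretation clauses for $\semB$. For $\tmop{mixin}$, unfolding $\semC{\tmop{mixin}\ A_1\ r_1\ A_2\ r_2} = \mixfun(\cdots).\mathtt{pres}$ and matching against the construction of Section~\ref{sec:combinators} gives exactly $\combineResult_1 \rtimes \combineResult_2$, which coincides with the target $B = \combineRes$ declared in the typing rule of Figure~\ref{fig:mixin-view}.

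The main obstacle is the explicit $\tmop{view}$ constructor, since $\semC{\tmop{view}\ A\ \tmop{as}\ B\ \tmop{via}\ v}$ is deliberately set to $\bot$ to signal a specification error, whereas $\semC{B}$ remains defined. Under the partial-function reading $\partialf$ underlying $\semC{-}$, the proposition is most naturally interpreted via Kleene equality, so the claim holds vacuously in this case because the left-hand side is undefined; equivalently, one restricts the statement to derivations on which $\semC{C}$ is defined. This restriction exactly matches the authors' discussion (just after the definition of $\semC$) of why interpreting $\tmop{view}$ as $\semC{B}$ directly would not be useful. Either convention disposes of the remaining case and completes the induction.
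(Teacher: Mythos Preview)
The paper does not actually prove this proposition: it is stated together with the preceding one under the remark ``As can be seen both in the algorithms and in the rules above, the interpretation as a presentation of an embedding or a view are the target theories themselves,'' and left at that. Your structural induction on the typing derivation is therefore strictly more than the paper supplies, and your treatment of the embedding-subsumed, composition, and $\mathsf{mixin}$ cases is sound and matches what a careful reader would reconstruct from the semantic clauses.

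You are also right to isolate the explicit $\mathsf{view}$ constructor as the delicate case. Since the paper deliberately sets $\semC{\mathsf{view}\ A\ \mathsf{as}\ B\ \mathsf{via}\ v} = \bot$ while $\semC{B}$ is defined, the proposition fails under literal equality; the paper itself notes (immediately after defining $\semC{-}$) that one \emph{could} have returned $\semC{B}$ here but chose not to. Your reading of the statement via Kleene equality on the partial map $\semC{-}$, or equivalently restricting to derivations on which $\semC{C}$ is defined, is the natural and charitable interpretation, and it is the only one under which the proposition holds. In short, your proof is correct and in fact surfaces a subtlety that the paper's one-line justification glosses over.
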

\begin{proposition}\label{emb-are-views}
If $C : \extty{A}{B}$ then $C : \viewty{A}{B}$.
\end{proposition}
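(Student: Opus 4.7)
My plan is to proceed by structural induction on the derivation of $C : \extty{A}{B}$. The type system in Figures~\ref{fig:corecomb}, \ref{fig:compo}, and \ref{fig:mixin-view} is engineered precisely so that this induction goes through: as the authors note, the view judgements are obtained from the embedding judgements by rewriting the conclusion from $: \extty{A}{B}$ to $: \viewty{A}{B}$. Thus for every rule concluding $C : \extty{A}{B}$, there is a parallel view rule with syntactically the same premises on $C$ (up to $\extty{}{}$ being replaced by $\viewty{}{}$ in those premises that mention them).

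For the base-like cases ($\empty$, $\tmop{Theory}\{\ell\}$, renaming $A\ r$, and the two $\tmop{extend}$ rules in Figure~\ref{fig:corecomb}), no induction hypothesis is required: the premises reference only $\thryty$, $\permty{}$, raw contexts, and typing in the underlying type theory. Each of these cases discharges immediately by invoking the corresponding view rule on the same premises.

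For $\tmop{combine}$, the premises $A : \extty{C}{A^\prime}$ and $B : \extty{C}{B^\prime}$ remain embedding judgements even in the view version of the rule (since the combine construction fundamentally consumes two embeddings, per Section~\ref{sec:combine}), so no invocation of the induction hypothesis is needed; one simply applies the view-typed combine rule, which outputs $: \viewty{C}{\combineRes}$. The only case where the induction hypothesis does real work is sequential composition: given $X ; Y : \extty{A}{C}$ derived from $X : \extty{A}{B}$ and $Y : \extty{B}{C}$ via the second rule of Figure~\ref{fig:compo}, the induction hypothesis yields $X : \viewty{A}{B}$ and $Y : \viewty{B}{C}$, and then the third rule of Figure~\ref{fig:compo} produces $X ; Y : \viewty{A}{C}$.

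I expect no genuine obstacle here; the proposition functions as a sanity check confirming that the typing rules were grouped correctly. The only potential pitfall is bookkeeping: one must verify that every introduction form for $\extty{}{}$ has been accounted for, including the slightly unusual case where $\tmop{mixin}$ appears in Figure~\ref{fig:mixin-view} (its first argument is already a $\viewty{}{}$, so \emph{mixin} never inhabits $\extty{}{}$ and need not be considered in the induction at all). Once this case analysis is made explicit, closure under the parallel rules is immediate.
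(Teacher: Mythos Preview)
Your proposal is correct. The paper gives no proof for this proposition: it is intended to be immediate from the way the typing rules are set up, since (as the text preceding Figure~\ref{fig:corecomb} explains) the $\viewty{}{}$ rules are obtained from the $\extty{}{}$ rules by rewriting only the conclusion. Your structural induction simply makes this design choice explicit, and your identification of composition as the sole case requiring the induction hypothesis is accurate.
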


\section{Examples}\label{sec:examples}

We show some progressively more complex examples, drawn from our 
library~\cite{TPCProto}.
These are chosen to illustrate the power of the combinators, and how
they solve the various problems we highlighted in \S\ref{sec:motivation}.

The simplest use of $\mathsf{combine}$ comes very quickly in a hierarchy
built using \emph{tiny theories}, namely when we construct a
pointed magma from a magma and (the theory of) a point.

\begin{lstlisting}[mathescape]
Carrier := Empty   extended_by { U : type }
Magma   := Carrier extended_by { * : U $\to$ U $\to$ U }
Pointed := Carrier extended_by { e : U }
PointedMagma := Magma || Pointed
\end{lstlisting}
\noindent where we have used the $||$ sugar for $\mathsf{combine}$.  The
result is an embedding
$\semE{\tmop{PointedMagma}} : \semC{\tmop{Carrier}} \rightarrow \semC{\tmop{PointedMagma}}$.

\noindent If we want a theory of \emph{two} points, we need to rename one of
them:
\begin{lstlisting}[mathescape]
TwoPointed := combine Pointed [], Pointed [e $\mapsto$ e$^\prime$]
\end{lstlisting}
\noindent  We can also extend by properties:
\begin{lstlisting}[mathescape]
LeftUnital := PointedMagma extended_by {
    axiom leftIdentity : $\forall$ x:U. e * x = x
}
\end{lstlisting}

This illustrates a design principle: properties should be defined as
extensions of their minimal theory.  Such minimal theories are most often
\emph{signatures}, in other words property-free theories.  By the results of
the previous section, this maximizes reusability.  Even though signatures
have no specific status in our framework, they arise very naturally as
``universal base points'' for theory development.

$\tmop{LeftUnital}$ has a natural dual, $\tmop{RightUnital}$.
While $\tmop{RightUnital}$ is straightforward to define explicitly, this should nevertheless
give pause, as this is really duplicating information which already exists.
We can use the following self-view to capture that information:
\begin{lstlisting}[mathescape]
Flip := view Magma as Magma via [ * $\mapsto$ $\lambda$ x y $\cdot$ y * x ]
\end{lstlisting}
\noindent Note that there is no interpretation for $\semC{\tmop{Flip}}$
as a theory or as an embedding;
if we were to perform the substitution directly, we would obtain
\begin{lstlisting}
Theory { U : type; fun (x,y). y * x : (U,U) -> U }
\end{lstlisting}
which is ill-defined since it has a non-symbol on the left-hand-side, and
it contains the undefined symbol $*$.

One could be tempted to write
\begin{lstlisting}
RightUnital := mixin Flip [], LeftUnital []
\end{lstlisting}
\noindent but this is also incorrect since $\tmop{LeftUnital}$ is an
extension from $\tmop{PointedMagma}$, not $\tmop{Magma}$.  The solution
is to write
\begin{lstlisting}
RightUnital := mixin Flip [], (PointedMagma ; LeftUnital) []
\end{lstlisting}
\noindent which gives a correct result, but with an axiom still called
\tmop{leftIdentity}; the better solution is to write
\begin{lstlisting}[mathescape]
RightUnital := mixin Flip [],
 (PointedMagma ; LeftUnital)[leftIdentity $\mapsto$ rightIdentity]
\end{lstlisting}
\noindent which is the \tmop{RightUnital} we want.  The construction
also makes available an embedding from \tmop{Magma} (as if we had done the
construction manually) as well as views from \tmop{LeftUnital} and
\tmop{Magma}.

The syntax used above is sub-optimal: the path \tmop{PointedMagma ;
LeftUnital} may well be needed again, and should be named.  In other words,
\begin{lstlisting}
LeftUnit := PointedMagma ; LeftUnital
\end{lstlisting}
\noindent is a useful intermediate definition.

The previous examples reinforce the importance of signatures, and of
morphisms from signatures to ``interesting'' theories as important, separate
entities.  For example, $\tmop{Monoid}$ as an \emph{embedding} is most
usefully seen as a morphism from $\tmop{PointedMagma}$.

Our machinery also allows one to construct the inverse view, from
\tmop{LeftUnital} to \tmop{RightUnital}.  Consider the view
\tmop{Flip ; LeftUnital} and the identity view from \tmop{LeftUnital}
to itself.  These are exactly the inputs for $\mathtt{mediate}$,
which returns a (unique) view from \tmop{LeftUnital} to \tmop{RightUnital}.
Furthermore, we obtain (from the construction of the mediating view) that this
view composes with that from \tmop{RightUnital} to \tmop{LeftUnital} to
give the identity.  This is illustrated in Figure~\ref{fig:RightUnital}
where the $\semB{-}$ annotations on nodes are omitted; note that the morphisms
are in $\catctx$, not $\catthry$. Let
\[ RU =
    \comfun\left(\semB{\tmop{Flip}}, \semE{\tmop{LeftUnital}}, \semP{id},
      \semP{\left[\mathtt{leftIdentity}\rename\mathtt{rightIdentity}\right]}
    \right) \]
then $\tmop{Flip}_{RU} = RU.\mathtt{view}_{\tmop{LeftUnital}}$
and
\[\tmop{Flip}_{LU} =
RU.\mathtt{mediate}_{\tmop{LeftUnital}}\left(
    \semE{\tmop{LeftUnital}} ; \semB{\tmop{Flip}}, \semE{id}
   \right)
\]
\noindent The constrution of $\mathtt{mediate}$ insures that
$\tmop{Flip}_{LU} ; \tmop{Flip}_{RU} = \semE{id}$, \emph{provided} that we
know that
\[\semB{\tmop{Flip}} ; \semB{\tmop{Flip}} = \semB{id} :
    \semC{\tmop{Magma}}\rightarrow\semC{\tmop{Magma}}. \]
\noindent The above identity is not, however, structural, it properly belongs
to the underlying type theory: it boils down to asking if
\[ \forall x:U. \mathtt{flip}\left(\mathtt{flip\ x}\right) =_{\beta\eta\delta} x \]
\noindent or, to use the notation of \S\ref{sec:pres},
\[ \left[ U:\mathtt{Type}, x:U\right] \vdash
\mathtt{flip}\left(\mathtt{flip\ x}\right) \equiv x : U.\]

\begin{figure}[t]
\centering
\begin{tikzpicture}[node distance=3cm, auto]
  \node (P)              {$\tmop{RightUnital}$};
  \node (B) [right of=P] {$\tmop{LeftUnital}$};
  \node (A) [below of=P] {$\tmop{Magma}$};
  \node (C) [below of=B] {$\tmop{Magma}$};
  \node (X) [node distance=1.4cm, left of=P, above of=P] {$\tmop{LeftUnital}$};
  \node (Y) [below of=X] {$\tmop{Magma}$};
  \draw[->] (P) to node {$\tmop{Flip}_{RU}$} (B);
  \draw[->] (P) to node {$\semE{\tmop{RightUnital}}$} (A);
  \draw[->] (A) to node [swap] {$\semB{\tmop{Flip}}$} (C);
  \draw[->] (B) to node {$\semE{\tmop{LeftUnital}}$} (C);
  \draw[->] (X) to node [swap] {$\semE{\tmop{LeftUnital}}$} (Y);
  \draw[->, bend left] (X) to node {$\semE{id}$} (B);
  \draw[->, dashed] (X) to node {$\tmop{Flip}_{LU}$} (P);
  \draw[->] (Y) to node [swap] {$\semB{\tmop{Flip}}$} (A);
\end{tikzpicture}
\label{fig:RightUnital}
\caption{Construction of \tmop{LeftUnital} and \tmop{RightUnital}.  See
the text for the interpretation.}
\end{figure}

%

\section{Discussion}\label{sec:discussion}

We have been careful to be essentially parametric in the
underlying type theory.  From a categorical point of view, this is hardly
surprising: this is the whole point of contextual categories~\cite{Cartmell}.
A numbers of features can be added to the type theory, at no harm to
the combinators themselves -- see Jacobs~\cite{JacobsCLTT} and
Taylor~\cite{taylor1999practical} for many such features.

One of the features that we initially built in was to allow \emph{definitions}
in our theory presentations.
This is especially useful when transporting theorems from
one setting to another, as is done when using the ``Little Theories''
method~\cite{LittleTheories}.  It is beneficial to first build up \emph{towers
of conservative extensions} above each of the theories, so as to build up a
more convenient common vocabulary, which then makes interpretations easier to
build (and use)~\cite{carette03calculemus}.
However, this complicated the meta-theory too much, and that feature has
been removed from the current version.  We hope to use ideas
from~\cite{TTFP} towards this goal.

Lastly, we have implemented a ``flattener'' for our semantics,
which just turns a presentation $A$ given in our language into a flat
presentation $\mathsf{Theory}\{\ l\}$ by computing
$\mathsf{cod}\left(\semE{A}\right)$.  We have been very careful to
ensure that all our constructions leave no trace of the construction
method in the resulting flattened theory.  We strongly believe that
users of theories do not wish to be burdened by such details, and we also
want developers to have maximal freedom in designing a modular, reusable
and maintainable hierarchy without worrying about backwards compatibility
of the \emph{hierarchy}, only the end results: the theory presentations.

\paragraph{Realization}
Three prototypes exist: a stand-alone version~\cite{CaretteOConnorTPC,TPCProto},
one as emacs macros aimed at Agda~\cite{next-700-git}, and another~\cite{diagrams_mmt}
in \mmt\cite{MMT}, 
a framework for developing logics which allows rapid
prototyping of formal systems~\cite{prototyping_mmt}.  
In \mmt, we define the combinators as new theory expressions, 
defined as untyped symbol declarations in a theory, see Figure~\ref{fig:combinators_theory}.
The syntax is not identical to that of the previous section, but it is quite
close.  These theory expressions are then
elaborated using rules implemented in Scala, the implementation language of
\mmt. A \verb|Combinators| theory specifies which rules are loaded when
it is being used. 

The rules implement the operational semantics presented in
Section~\ref{sec:combinators}.   As the results of the combinators are
records of information, we choose to elaborate the expressions into
theory diagrams (i.e. a directed graph of theories and theory morphisms), which
can be named~\cite{diagrams_mmt}. These diagrams have a \emph{distinguished theory}, corresponding
to our $\semC{-}$ semantics, and a \emph{distinguished morphism}, corresponding
to our $\semE{-}$ semantics. Using this implementation, the graph in our motivating example, 
Figure~\ref{fig:cube2_monoid}, is generated by the code in Figure~\ref{fig:monoid_mmt}. 

\begin{figure}
\begin{lstlisting}[mathescape]
theory Combinators =
  include ?ModExp 
  empty # EMPTY 1 
  extends # 1 EXTEND { %L1_L2, $\cdots$} prec -1000000
  rename1 # L1 $\to$ L2 prec -500000
  rename # 1 RENAME { 2,$\cdots$ } prec -1000000
  combine # COMBINE 1 {2,$\cdots$} 3 {4,$\cdots$} prec -2000000
  translate # MIXIN 1 {2,$\cdots$} 3 {4,$\cdots$} prec -2000000

  rule rules?ComputeEmpty 
  rule rules?ComputeExtends 
  rule rules?ComputeRename 
  rule rules?ComputeCombine 
  rule rules?ComputeMixin 
\end{lstlisting}
\caption{Defining the syntax for the theory expressions and the rules to elaborate them in MMT.}
\label{fig:combinators_theory}
\end{figure}
\begin{figure}
\begin{lstlisting}[mathescape]
theory Empty = 
diagram Carrier := ?Empty EXTEND { U : sort} 
diagram Pointed := ?Carrier_pres EXTEND { e : tm U } 
diagram Pointed0 := ?Pointed RENAME {e $\rightsquigarrow$ 0} 
diagram Magma := ?Carrier_pres EXTEND {$\circ$: tm U $\rightarrow$ tm U $\rightarrow$ tm U }
diagram MagmaPlus := ?Magma RENAME {$\circ$ $\rightsquigarrow$ +} 
diagram Semigroup := ?Magma_pres EXTEND { assoc : $\cdots$ } 
diagram SemigroupPlus := 
   COMBINE ?Semigroup {$\circ$ $\rightsquigarrow$ + , assoc $\rightsquigarrow$ assoc_+} ?MagmaPlus {} 
diagram PointedMagma := 
   COMBINE ?Magma {} ?Pointed {} 
diagram PointedMagmaPlus := 
   COMBINE ?PointedMagma {$\circ$ $\rightsquigarrow$ +, e $\rightsquigarrow$ 0} ?Pointed0 {} 
diagram RightUnital := ?PointedMagma_pres EXTEND { runital : $\cdots$ } 
diagram RightUnitalPlus := 
   COMBINE ?RightUnital {$\circ$ $\rightsquigarrow$ + , e $\rightsquigarrow$ 0 } 
?PointedMagmaPlus {} 
diagram LeftUnital := ?PointedMagma_pres EXTEND { lunital : $\cdots$ } 
diagram LeftUnitalPlus := 
   COMBINE ?LeftUnital {$\circ$ $\rightsquigarrow$ + , e $\rightsquigarrow$ 0 } ?PointedMagmaPlus {} 
diagram Unital := 
   COMBINE ?RightUnital {} ?LeftUnital {} 
diagram UnitalPlus := 
   COMBINE ?RightUnitalPlus {} LeftUnitalPlus {} 
diagram Monoid := 
   COMBINE ?Unital {} ?Semigroup {} 
\end{lstlisting}
\caption{Defining the syntax for the theory expressions and the rules to elaborate them in MMT.}
\label{fig:monoid_mmt}
\end{figure}	

The work in~\cite{next-700-git} uses the combinators we present in this paper
to build a library of \emph{PackageFormer}s~\cite{alhassy2019}, an abstraction
over bundling mechanisms. The combinators are implemented in terms of
variationals that manipulate PackageFormers~\cite{meta-prim-blog}. The
implementation provided now is an emacs environment, as a step to have these
features as an Agda language extension.

\section{Related Work}\label{sec:related}

Building large formal libraries while leveraging the structure of the domain
is not new, and has been tackled by theoreticians and system builders alike.
We have been particularly inspired by the early work of Goguen and
Brustall on CLEAR~\cite{PuttingTheoriesTogether,SemanticsOfClear} and
OBJ~\cite{goguen1987Obj}. The semantics of their combinators is
given in the category of theories and theory morphisms.
ASL~\cite{wirsing1986structured} and Tecton~\cite{Tecton1,Tecton2} are
also systems that embraced the idea of theory expressions early on. Smith's
Specware~\cite{Smith93,Smith99} really focuses on using the structure
of theory and theory morphisms to compose specifications via applying refinement
combinators, and in the end to generate software that is
correct-by-construction.  These systems gave us basic operational ideas,
and some of the semantic tools we needed.  In particular, these systems use the
idea of ``same name, same thing'' to choose between the different potential
meanings of combining theories.  We implemented a first prototype%
~\cite{MathSchemeExper} which seemed promising while the library was small
(less than $100$ theories). As it grew, difficulties were encountered: basically
the ``same name, same thing'' methodology was not adequate for encoding
mathematics without a lot of unnecessary duplication of concepts.

A successor of ASL, CASL~\cite{CoFI:2004:CASL-RM}, and its current
implementation Hets~\cite{HETS} offers many more combinators than we do
for structuring specifications, as well as a documented categorical semantics.
To compare with our \tmop{combine}, CASL has a
\verb|sum| operation (on a ``same name, same thing'' basis) that
builds a colimit, similar to what is also done in Specware. However the 
use of ``same name, same thing'' is
problematic when combining theories developed independently that might 
(accidentally) use the
same name for different purposes. For example, extending \verb|Carrier| with two
different binary operations and their respective unit. If these units happened
to have been named the same (say~\verb|e|), then by using \verb|sum|, the resulting theory
would have a single ``point'' \verb|e|
that acts as the unit for both operations, which is not what either user
intended.  In other words, while the result is a pushout, it is not
the pushout over the intended base theory.  CASL structured
specifications have other complex features: they allow a user to specify
that a model of a specification should be \emph{free}, or to derive a
specification from a previous one via \emph{hiding} certain fields.
Semantically, these are very useful features to have.
Unfortunately, as the CASL manual also points out, the use of these features
makes it impossible to write a ``flattener'' for theories; in other words,
there is, in general, no means to have a finite theory presentation for these.
This goes against our usability requirement that users should be able to see a
fully flattened version of their theories.

Of the vast algebraic specification literature around this topic, we want to
single out the work of Oriat~\cite{Oriat} on isomorphism of specification
graphs as capturing similar ideas to ours on extreme modularity.  And it cannot
be emphasized enough how crucial Bart Jacob's book~\cite{JacobsCLTT} has been
to our work.

Another line of influence is through universal
algebra~\cite{whitehead1898treatise,burris1981course}, more precisely the
\emph{constructions} of universal algebra, rather than its theorems.  That we
can manipulate signatures as algebraic objects is firmly from that literature.
Of course, we must generalize from the single-sorted equational approach of the
mathematical literature, to the dependently typed setting, as
Cartmell~\cite{Cartmell} started and Taylor~\cite{taylor1999practical}
advocates.  As we eschew all matters dealing with models, the syntactic
manipulation aspects of universal algebra generalize quite readily.  The
syntactic concerns are also why \emph{Lawvere theories} ~\cite{LawvereTheories}
are not as important to us.
Sketches~\cite{barr1990category} certainly could have been used, but would have
led us too far away from the elegance of using structures already present in
the $\lambda$-calculus (namely contexts) quite directly.
The idea of defining  combinators over theories has also been used in Maude
\cite{duran2007maude} and Larch \cite{guttag2012larch}. 

\emph{Institutions} might also appear to be an ideal setting for our work.  But
even as the relation to categorical logic has been worked
out~\cite{GoguenMPRS07}, it remains that these theories are largely semantic,
in that they all work up to equivalence.  This makes the theory of institutions
significantly simpler, however it also makes it difficult to use for
user-oriented systems: people really do care what names their symbols have in
their theory presentations.

The Harper-Mitchell-Moggi work on Higher-order Modules%
~\cite{harper1989higher} covers some of the same themes we do: a
set of constructions (at the semantic level)
similar to ours is developed for ML-style modules.  However,
they did not seem to realize that these constructions could be turned into
an external syntax, with application to structuring a large
library of theories (or modules).  Nor did they see the use of fibrations,
since they avoided such issues ``by construction''.  Moggi returned to
this topic~\cite{Moggi:1989}, and did make use of fibrations as well
as \textit{categories with attributes}, a categorical version of contexts.
Post-facto, it is possible to recognize some of our ideas as being
present in Section 6 of that work; the emphasis is however completely
different. In that same vein \emph{Structured theory presentations and
logic representations}~\cite{HST1994} does have a set of combinators.
However, the semantics is inaccurate: Definition 3.3 of the signature of a
presentation requires that both parts of a union must have the same signature
(to be well formed) and yet their Example 3.6 on the next page is not
well formed. That being said, many parts of the theory-level semantics is
the same. However, it is our morphism-level semantics which really allows
one to build large hierarchies conveniently.

A rather overlooked 1997 Ph.D. thesis by Sherri Shulman~\cite{Shulman:1997}
presents a number of interesting combinators. Unfortunately the
semantics are unclear, especially in cases where theories have parts in
common; there are heavy restrictions on naming, and no renaming, which makes
the building of large hierarchies fragile.  Nevertheless, there is much
kinship here, especially that of extreme modularity. If this work had been
implemented in a mainstream tool, it would have saved us a lot of effort.

Taylor's magnificent ``Practical Foundations of
Mathematics''~\cite{taylor1999practical} does worry more about syntax.
Although the semantic component is there, there are no algorithms and no notion
of building up a library. The categorical tools are presented too, but
not in a way to make the connection sufficiently clear so as to lead to an
implementable design.  That work did lead us to investigate aspects of
Categorical Logic and Type Theory, as exposed by Jacobs~\cite{JacobsCLTT}.
This work and the vast literature on categorical approaches to
dependent type theory%
\cite{r2019typetheoretic,gambino2019models,JACOBS1993169,hofmann1997syntax,hofmann1995extensional,hofmann1994interpretation,dybjer1995internal,ahrens2017displayed,clairambault2014biequivalence,streicher2012semantics,firore2001semantics,fiore2005mathematical}
reveal that the needed structure really is already present, and just needs
to be reflected back into syntax.

MMT~\cite{MMT} shares the same motivation of building theories modularly with
morphisms connecting them. However, it does not support theory expressions.
Apart from inclusions, all morphisms need to be given manually ---
\cite{DHS2009} shows some examples.
Many of the scaling problems that we have identified
are still present. MMT does have some advantages: it is foundation 
independent, and possesses some rather nice web-based tools for pretty
display. But the extend operation (named \texttt{include}) is
theory-internal, and its semantics is not given through flattening. The result
is that their theory
hierarchies explicitly suffer from the ``bundling'' problem, as lucidly
explained in~\cite{SpittersvanderWeegen}, who introduce type classes in
Coq to help alleviate this problem. 

Coq has both \emph{Canonical Structures} and \emph{type classes}%
~\cite{SpittersvanderWeegen}, but no combinators to make new ones out of old.
Similarly, Lean~\cite{Lean} has some (still evolving) structuring
mechanisms but not combinators to form new theories from old.

Isabelle's \emph{locales} support \emph{locale expressions}%
\cite{ballarin2003locales}, which are also reminiscent of ours.  However, we
are unaware of a denotational semantics for them; furthermore, neither combine
nor mixin are supported; their merge operation uses a same-name-same-thing
semantics.  Axiom~\cite{jenks1992axiom} does support theory formation
operations, but these are quite restricted, as well as defined purely
operationally.  They were meant to mimic what mathematicians do informally when
operating on theories. To the best of our knowledge, no semantics for them has
ever been published.

\section{Conclusion}\label{sec:conclusion}

There has been a lot of work done in mathematics to give structure to
mathematical theories, first via universal algebra, then via category
theory.  But even though a lot of this work
started out being syntactic, very quickly it became mostly
semantic within a non-constructive meta-theory, and thus largely useless for
the purposes of concrete implementations and full automation.

Here we make the observation that, for dependent type theories in common use,
the category of theory presentations coincides with the opposite of the
category of contexts.  This allows us to draw freely from developments in
categorical logic, as well as to continue to be inspired by algebraic
specifications.  Interestingly, key here is to make the opposite choice as
Goguen's (as the main inspiration for the family of OBJ languages) in two ways:
our base language is firmly higher-order, as well as dependently typed, while
our ``module'' language is first-order, and we work in the opposite category.

We provide a simple-to-understand term language of ``theory expression
combinators'', along with their semantics.  We have shown
that these fit our requirements of allowing to capture mathematical
structure, while also allowing this structure to be hidden from users.

The design was firmly driven by its main application: to build a large
library of mathematical theories, while capturing the inherent structure
known to be present in such a development. To reflect mathematical practice,
it is crucial to \emph{take names seriously}. This leads us to ensuring
that renamings are not only allowed, but fundamental. Categorical
semantics and the desire to capture structure inexorably lead us towards
considering \emph{theory morphisms}
as the primary notion of study --- even though our original
goal was grounded in the theories themselves. We believe that other work on
theory combinators would have been more successful had they focused on the
morphisms rather than the theories; even current work in this domain,
which pays lip service to morphisms, is very theory-centric.

Paying close attention to the ``conventional wisdom'' of category logic
and categorical treatments of dependent type theory
led to taking both cartesian liftings and mediating morphisms as important
concepts. Doing so immediately improved the compositionality of our
combinators. Noticing that this puts the focus on fibrations was also
helpful. Unfortunately, taking names seriously means that the fibrations
are not cloven; we turn this into an opportunity for users to retain control
of their names, rather than to force some kind of ``naming policy''.

A careful reader will have noticed that our combinators are ``external'',
in the sense that they take and produce theories (or morphisms or ...).
Many current systems use ``internal'' combinators, such as \texttt{include},
potentially with post facto qualifiers (such as \texttt{ocaml}'s
\texttt{with} for signatures) to ``glue'' together items that would have
been identified in a setting where morphisms, rather than theories, are
primary. Furthermore, we are unaware of any system that guarantees that
their equivalent to our $\mathsf{combine}$ is \emph{commutative} 
(Proposition~\ref{combine-commutative}). Lastly, this enables future
features, such as limits of diagrams, rather then just binary
combinations/mixins.

Prototype implementations~\cite{TPCProto,diagrams_mmt,meta-prim-blog}
show the usefulness of our language; the first of these
was used to capture the knowledge
for most of the theories on Jipsen's list~\cite{jipsen} as well as
many others from Wikipedia, most of the modal logics on Halleck's
list~\cite{halleck}, as well as two formalizations of basic
category theory, once dependently-typed, and another following Lawvere's
ETCS approach as presented on the nLab~\cite{ETCS}. Totally slightly
over $1000$ theories in slightly over $2000$ lines of code, this
demonstrates that our combinators, coupled with the \emph{tiny theories}
approach, does seem to work.

Even more promising, our use of standard categorical constructions
points the way to simple generalizations which should allow us to capture
even more structure, without having to rewrite our library.  Furthermore,
as we are largely independent of the details of the type theory, this structure
seems very robust, and our combinators should thus port easily to other
systems.

\begin{acknowledgements}
We wish to thank Michael Kohlhase, Florian Rabe and William M. Farmer
for many fruitful conversations on this topic. The comments of one
referee were very useful in helping us clarify our paper, and we are
thankful to them.  We also wish to thank the
participants of the Dagstuhl Seminar 18341 ``Formalization of Mathematics
in Type Theory'' whose interest in this topic helped insure that this paper got
finished.
\end{acknowledgements}

\bibliography{mathscheme}
\bibliographystyle{spmpsci}

\end{document}